\newtheorem{theorem}{Theorem}
\newtheorem{lemma}{Lemma}
\newtheorem{definition}{Definition}
\newtheorem{remark}{Remark}
\newtheorem{example}{Example}
\newtheorem{problem}{Problem}
\newtheorem{algorithm}{Algorithm}
\begin{document}

 \title{Technical Report: NUS-ACT-11-003-Ver.1:\\
 Communicate only when necessary:\\ Cooperative tasking for multi-agent systems}
 \author{Mohammad~Karimadini,
        and~Hai~Lin
\thanks{M. Karimadini and H. Lin are both from the Department of Electrical and Computer Engineering, National University of Singapore,
Singapore. Corresponding author, H. Lin {\tt\small elelh@nus.edu.sg}
} } \maketitle \thispagestyle{empty} \pagestyle{empty}

\begin{abstract}
New advances in large scale distributed systems have amazingly
offered complex functionalities through parallelism of simple and
rudimentary components. The key issue in cooperative control of
multi-agent systems is the synthesis of local control and
interaction rules among the agents such that the entire controlled
system achieves a desired global behavior. For this purpose, three
fundamental problems have to be addressed: (1) task decomposition
for top-down design, such that the fulfillment of local tasks
guarantees the satisfaction of the global task, by the team; (2)
fault-tolerant top-down design, such that the global task remain
decomposable and achievable, in spite of some failures, and (3)
design of interactions among agents to make an undecomposable task
decomposable and achievable in a top-down framework.
 The first
two problems have been addressed in our previous works, by
identifying necessary and sufficient conditions for task automaton
decomposition, and fault-tolerant task decomposability, based on
decision making on the orders and selections of transitions,
interleaving of synchronized strings and determinism of bisimulation
quotient of local task automata. This paper deals with the third
problem and proposes a procedure to redistribute the events among
agents in order to enforce decomposability of an undecomposable task
automaton. The decomposability conditions are used to identify the
root causes of undecomposability which are found to be due to
over-communications that have to be deleted, while respecting the
fault-tolerant decomposability conditions; or because of the lack of
communications that require new sharing of events, while considering
new violations of decomposability conditions. This result provides a
sufficient condition to make any undecomposable deterministic task
automaton decomposable in order to facilitate cooperative tasking.
Illustrative examples are presented to show the concept of task
automaton decomposabilization.
\end{abstract}

%
%

\section{INTRODUCTION}
With new advances in technology and emergence of large scale complex
systems \cite{Lesser1999, Lima2005}, there is an ever-increasing
demand for cooperative control of distributed systems with
sophisticated specifications \cite{Choi2009, Kazeroon2009},
\cite{Georgios2009}, \cite{Ji2009} which impose new challenges that
fall beyond the traditional methods \cite{Tabuada2006, Belta2007,
Kloetzer2010, Georgios2009}. Conventional approaches either consider
the team of agents as a monolithic plant to be controlled by a
centralized unit, or design and iteratively adjust local
controllers, in a bottom-up structure, to generate a behavior closed
to a desired global behavior. Although the latter approache offers
more flexibility, scalability and functionality with lower cost, due
to local actuation and communications of agents \cite{Hinchey2005,
Truszkowski2006, Kloetzer2007}, they fail to guarantee a given
global specification \cite{Crespi2008}. For this purpose, top-down
cooperative control aims at formal design of local controllers in
order to collectively achieve the global specification, by design
\cite{Willner1991,cai2010supervisor}.

To address the top-down cooperative control, three fundamental
questions are evoked: The first question is the task decomposition
problem that is interested in understanding of whether all tasks are
decomposable, and if not, what are the conditions for task
decomposability. It furthermore asks that if the task is
decomposable and local controllers are designed to satisfy local
tasks, whether the whole closed loop system satisfies the global
specification. Subsequently, the second question refers to the
cooperative control under event failures, and would like to know if
after the task decomposition and local controller designs for global
satisfaction, some events fail in some agents, then whether the task
still remains decomposable and globally satisfied, in spite of event
failures. As another follow-up direction, the third question
investigates the way to make an undecomposable task decomposable
through modification of local agents in order to accomplish the
proposed cooperative control.

For cooperative control of logical behaviors \cite{Koutsoukos2000},
represented in automata \cite{Cassandras2008, Kumar1999}, the first
question (task decomposability for cooperative tasking) was
addressed in our previous work
\cite{Automatica2010-2-agents-decomposability}, by decomposing a
given global task automaton into two local task automata such that
their parallel composition bisimulates the original task automaton.
By using the notion of shared events, instead of common events and
incorporating the concept of global decision making on the orders
and selections between the transitions, the decomposability result
was generalized in \cite{TAC2011-n-agents-decomposability} to an
arbitrary finite number of agents. Given a deterministic task
automaton, and a set of local event sets, necessary and sufficient
conditions were identified for task automaton decomposability based
on decision making on the orders and selections of transitions,
interleaving of synchronized strings and determinism of bisimulation
quotient of local automata. It was also proven that the fulfillment
of local task automata guarantees the satisfaction of the global
specification, by design.

The second question, cooperative tasking under event failure, was
investigated in \cite{Automatica2011-Fault-tolerant}, by introducing
a notion of passive events to transform the fault-tolerant task
decomposability problem to the standard automaton decomposability
problem in \cite{TAC2011-n-agents-decomposability}. The passivity
was found to reflect the redundancy of communication links, based on
which the necessary and sufficient conditions have been then
introduced under which a previously decomposable task automaton
remains decomposable and achievable, in spite of events failures.
The conditions ensure that after passive failures, the team of
agents maintains its capability for global decision making on the
orders and selections between transitions; no illegal behavior is
allowed by the team (no new string emerges in the interleavings of
local strings) and no legal behavior is disabled by the team (any
string in the global task automaton appears in the parallel
composition of local automata). These conditions interestingly
guarantee the team of agents to still satisfy its global
specification, even if some local agents fail to maintain their
local specifications.

This paper deals with the third question to investigate how to make
undecomposable task automata decomposable in order for cooperative
tasking of multi-agent systems. For a global task automaton that is
not decomposable with respect to given local event sets, the problem
is particularly interested in finding a way to modify the local task
automata such that their parallel composition bisimulates the
original global task automaton, to guarantee its satisfaction by
fulfilling the local task automata.

Decomposition of different formalisms of logical specification have
been reported in the literature. Examples of such methods can be
seen for decomposition of a specification given in CSP
\cite{Moore1990}, decomposition of a LOTOS \cite{Go1999, Arora1998,
Hultstrom1994} and decomposition of petri nets \cite{Cao1994,
Zaitsev2004}. The problem of automaton decomposabilization has been
also studies in computer science literature. For example,
\cite{Morin1998} characterized the conditions for decomposition of
asynchronous automata in the sense of isomorphism based on the
maximal cliques of the dependency graph. The isomorphism equivalence
used in \cite{Morin1998} is however a strong condition, in the sense
that two isomorphic automata are bisimilar but not vise versa
\cite{Cassandras2008}. Moreover, \cite{Morin1998} considers a set of
events to be attributed to a number of agents, with no predefinition
of local event sets. While event attribution is suitable for
parallel computing and synthesis problems in computer science,
control applications typically deal with parallel distributed plants
\cite{Mukund2002} whose events are predefined by the set of sensors,
actuators and communication links across the agents. Therefore, it
would be advantageous to find a way to make an undecomposable
automaton decomposable with respect to predefined local event sets,
by modifying local task automata. Since the global task automaton is
fixed, one way to modify the local task automata is through the
modification in local event sets, which is the main theme of this
paper. Another related work is \cite{Kiyamura2003} that proposes a
method for automaton decomposabilization by adding synchronization
events such that the parallel composition of local automata is
observably bisimilar to the original automaton. The approach in
\cite{Kiyamura2003}, however, allows to add synchronization events
to the event set that will enlarge the size of global event set. Our
work deals with those applications with fixed global event sets and
predefined distribution of events among local agents, where
enforcing the decomposability is not allowed by adding the new
synchronization events, but instead by redistribution of the
existing events among the agents.

For this purpose, we propose an algorithm that uses previous results
on task decomposition \cite{Automatica2010-2-agents-decomposability,
TAC2011-n-agents-decomposability} to identify and overcome
dissatisfaction of each decomposability condition. The algorithm
first removes all redundant communication links using the
fault-tolerant result \cite{Automatica2011-Fault-tolerant}. As a
result, any violation of decomposability conditions, remained after
this stage, is not due to redundant communication links, and hence
cannot be removed by means of link deletions. Instead, the algorithm
proceeds by establishing new communication links to provide enough
information to facilitate the task automaton decomposition. Since
each new communication link may overcome several violations of
decomposability conditions, the algorithm may offer different
options for link addition, leading to the question of optimal
decomposability with minimum number of communication links.
It is found that if link additions impose no new violations of
decomposability conditions, then it is possible to make the
automaton decomposable with minimum number of links. However, it is
furthermore shown that, in general, addition of new communication
links may introduce new violations of decomposability conditions
that in turn require establishing new communication links. In such
cases, the optimal path depends on the structure of the automaton
and requires a dynamic exhaustive search to find the sequence of
link additions with minimum number of links. Therefore, in case of
new violations, a simple sufficient condition is proposed to provide
a feasible suboptimal solution to enforce the decomposability,
without checking of decomposability conditions after each link
addition. This approach can decompose any deterministic task
automaton, after which, according to the previous results, designing
local controllers such that local specification are satisfied,
guarantees the fulfillment of the global specification, by design.

The rest of the paper is organized as follows. Preliminary lemmas,
notations, definitions and problem formulation are represented in
Section \ref{PROBLEM FORMULATION}. This section also establishes the
links to previous works on task automaton decomposition and
fault-tolerant decomposition results. Section \ref{TASK AUTOMATON
DECOMPOSABILIZATION} proposes an algorithm to make any
undecomposable deterministic automaton decomposable by modifying its
local event sets.
Illustrative examples are also given to elaborate the concept of
task automaton decomposabilization.
Finally, the paper concludes with remarks and discussions in Section
\ref{CONCLUSION}. Proofs of the lemmas are readily given in the
Appendix.
\section{PROBLEM FORMULATION}\label{PROBLEM FORMULATION}
\subsection{Definitions and notations}
We first recall the definitions and notations used in this paper.

A \emph{deterministic automaton} is a tuple $A := \left(Q, q_0, E,
\delta \right)$ consisting of a set of states $Q$; an initial state
$q_0\in Q$; a set of events $E$ that causes transitions between the
states, and a transition relation $\delta \subseteq Q \times E\times
Q$, with partial map $\delta: Q \times E \to Q$, such that $(q, e,
q^{\prime})\in \delta$ if and only if state $q$ is transited to
state $q^{\prime}$ by event $e$, denoted by
$q\overset{e}{\underset{}\rightarrow } q^{\prime}$ (or $\delta(q, e)
= q^{\prime}$). A \emph{nondeterministic automaton} is a tuple $A :=
\left(Q, q_0, E, \delta \right)$ with a partial transition map
$\delta: Q \times E \to 2^Q$, and if hidden transitions
($\varepsilon$-moves) are also possible, then a nondeterministic
automaton with hidden moves is defined as $A := \left(Q, q_0,
E\cup\{\varepsilon\}, \delta \right)$ with a partial map $\delta: Q
\times (E\cup\{\varepsilon\} )\to 2^Q$. For a nondeterministic
automaton the initial state can be generally from a set
$Q_0\subseteq Q$. Given a nondeterministic automaton $A$, with
hidden moves, the $\varepsilon$-closure of $q\in Q$, denoted by
$\varepsilon^*_A(q)\subseteq Q$, is recursively defined as: $q\in
\varepsilon^*_A(q)$; $q^{\prime}\in \varepsilon^*_A(q)\Rightarrow
\delta(q^{\prime}, \varepsilon)\subseteq \varepsilon^*_A(q)$. The
transition relation can be extended to a finite string of events,
$s\in E^*$, where $E^*$ stands for $Kleene-Closure$ of $E$ (the set
of all finite strings over elements of $E$). For an automaton
without hidden moves, $\varepsilon^*_A(q) = \{q\}$, and the
transition on string is inductively defined as
$\delta(q,\varepsilon) = q$ (empty move or silent transition), and
$\delta(q,se)=\delta(\delta(q,s),e)$ for $s\in E^*$ and $e\in E$.
For an automaton $A$, with hidden moves, the extension of transition
relation on string, denoted by $\delta: Q \times E^*\to 2^Q$, is
inductively defined as: $\forall q \in Q, s \in E^*, e\in E$:
$\delta(q, \varepsilon):= \varepsilon^*_A(q)$ and $\delta(q,
se)=\varepsilon^*_A(\delta(\delta(q,s),e)) =  \overset{} {
\underset{q^{\prime}\in \delta(q, s)}{\cup} }\left\{
 \overset{}  { \underset{ q^{\prime\prime}\in \delta(q^{\prime}, e) } {\cup}
 }  \varepsilon^*_A(q^{\prime\prime}) \right\}$ \cite{Kumar1999}.

The operator $Ac(.)$ \cite{Cassandras2008} is then defined by
excluding the states and their attached transitions that are not
reachable from the initial state as $Ac(A) = \left(Q_{ac}, q_0, E,
\delta_{ac} \right)$ with $Q_{ac}=\{q\in Q|\exists s\in E^*, q \in
\delta (q_0, s)\}$ and $\delta_{ac}=\delta|Q_{ac}\times E\rightarrow
Q_{ac}$, restricting $\delta$ to the smaller domain of $Q_{ac}$.
Since $Ac(.)$ has  no effect on the behavior of the automaton, from
now on we take $A = Ac(A)$.

We focus on deterministic global task automata that are simpler to
be characterized, and cover a wide class of specifications. The
qualitative behavior of a deterministic system is described by the
set of all possible sequences of events starting from the initial
state. Each such a sequence is called a string, and the collection
of strings represents the language generated by the automaton,
denoted by $L(A)$. The existence of a transition over a string $s\in
E^*$ from a state $q\in Q$ is denoted by $\delta(q, s)!$.
Considering a language $L$, by $\delta(q, L)!$ we mean that $\forall
\omega \in L: \delta(q, \omega)!$. For $e\in E$, $s\in E^*$, $e\in
s$ means that $\exists t_1, t_2 \in E^*$ such that $s = t_1et_2$. In
this sense, the intersection of two strings $s_1, s_2 \in E^*$ is
defined as $s_1\cap s_2 = \{e|e \in s_1 \wedge e\in s_2\}$.
Likewise, $s_1\backslash s_2$ is defined as $s_1\backslash s_2 =
\{e|e\in s_1, e\notin s_2\}$. For $s_1, s_2 \in E^*$, $s_1$ is
called a sub-string of $s_2$, denoted by $s_1\leqslant s_2$, when
$\exists t\in E^*$, $s_2 = s_1t$. Two events $e_1$ and $e_2$ are
called successive events if $\exists q\in Q: \delta(q,e_1)! \wedge
\delta(\delta(q,e_1),e_2)!$ or $\delta(q,e_2)! \wedge
\delta(\delta(q,e_2),e_1)!$. Two events $e_1$ and $e_2$ are called
adjacent events if $\exists q\in Q:\delta(q,e_1)! \wedge
\delta(q,e_2)!$.

 To compare the task automaton and its decomposed automata, we use
the \emph{bisimulation relations}. Consider two automata $A_i=( Q_i,
q_i^0$, $E, \delta _i)$, $i=1, 2$. A relation $R\subseteq Q_1 \times
Q_2$ is said to be a simulation relation from $A_1$ to $A_2$ if
$(q_1^0, q_2^0) \in R$, and $\forall\left( {q_1 ,q_2 } \right) \in
R, \delta_1(q_1, e)= q'_1$, then $\exists q_2^{\prime}\in Q_2$ such
that $\delta_2(q_2, e)=q'_2, \left( {q'_1 ,q'_2 } \right) \in R$. If
$R$ is defined for all states and all events in $A_1$, then $A_1$ is
said to be similar to $A_2$ (or $A_2$ simulates $A_1$), denoted by
$A_1\prec A_2$ \cite{Cassandras2008}. If $A_1\prec A_2$, $A_2\prec
A_1$, with a symmetric relation, then $A_1$ and $A_2$ are said to be
bisimilar (bisimulate each other), denoted by $A_1\cong A_2$
\cite{Zhou2006}. In general, bisimilarity implies languages
equivalence but the converse does not necessarily hold
\cite{Alur2000}.

In these works \emph{natural projection} is used to obtain local
tasks, as local perspective of agents from the global task. Consider
a global event set $E$ and its local event sets $E_i$,
$i=1,2,...,n$, with $E=\overset{n}{\underset{i=1}{\cup} } E_i$.
Then, the natural projection $p_i:E^*\rightarrow E_i^*$ is
inductively defined as
 $p_i(\varepsilon)=\varepsilon$, and
 $\forall s\in E^*, e\in E:p_i(se)=\left\{
  \begin{array}{ll}
  p_i(s)e & \hbox{if $e\in E_i$;} \\
  p_i(s) & \hbox{otherwise.}
  \end{array}
\right.$ Accordingly, inverse natural projection $p_i^{-1}: E_i^*
\to 2^{E^*}$ is defined on an string $t\in E_i^*$ as $p_i^{-1}(t):=
\{s\in E^*|p_i(s) = t\}$.

The natural projection is also defined on automata as $P_i:
A\rightarrow A$, where, $A$ is the set of finite automata and
$P_i(A_S)$ are obtained from $A_S$ by replacing its events that
belong to $E\backslash E_i$ by $\varepsilon$-moves, and then,
merging the $\varepsilon$-related states. The  $\varepsilon$-related
states form equivalent classes defined as follows. Consider an
automaton $A_S=(Q, q_0, E, \delta)$ and a local event set
$E_i\subseteq E$. Then, the relation $\sim_{E_i}$ is the equivalence
relation on the set $Q$ of states such that $\delta(q, e) =
q^{\prime}\wedge e\notin E_i\Rightarrow q\sim_{E_i} q^{\prime}$, and
$[q]_{E_i}$ denotes the equivalence class of $q$ defined on
$\sim_{E_i}$.
 The set of equivalent classes of states
over $\sim_{E_i}$, is denoted by $Q_{/\sim_{E_i}}$ and defined as
$Q_{/\sim_{E_i}} = \{[q]_{E_i}|q\in Q\}$ \cite{Morin1998}. The
natural projection of $A_S$ into $E_i$ is then formally defined as
$P_i(A_S)=(Q_i = Q_{/\sim_{E_i}}, [q_0]_{E_i}, E_i, \delta_i)$, with
$\delta_i([q]_{E_i}, e) = [q^{\prime}]_{E_i}$ if there exist states
$q_1$ and $q_1^{\prime}$ such that $q_1\sim_{E_i} q$,
$q_1^{\prime}\sim_{E_i} q^{\prime}$, and $\delta(q_1, e) =
q^{\prime}_1$.

To investigate the interactions of transitions between automata,
particularly between $P_i(A_S)$, $i = 1, \ldots, n$, the
\emph{synchronized product of languages} is defined as follows.
Consider a global event set $E$ and local event sets $E_i$, $ i = 1,
\ldots, n$, such that $E = \overset{n}{\underset{i=1}{\cup} } E_i$.
For a finite set of languages $\{L_i \subseteq E_i^*\}_{i = 1}^n$,
the synchronized product (language product) of $\{L_i\}$, denoted by
$\overset{n}{\underset{i=1}{|} } L_i$, is defined as
$\overset{n}{\underset{i=1}{|} } L_i = \{s \in E^*|\forall
i\in\{1,\ldots, n\}: p_i(s)\in L_i\} =
\overset{n}{\underset{i=1}{\cap} } p_i^{-1}(L_i)$
\cite{Willner1991}.

Then, \emph{parallel composition (synchronized product)} is used to
define the composition of local task automata to retrieve the global
task automaton, and to model each local closed loop system by
compositions of its local plant and local controller automata. Let
$A_i=\left( Q_i,q_i^0,E_i,\delta _i\right)$, $i=1,2$ be automata.
The parallel composition (synchronous composition) of $A_1$ and
$A_2$ is the automaton $A_1||A_2=\left( Q = Q_1 \times Q_2, q_0 =
(q_1^0, q_2^0), E = E_1 \cup E_2, \delta\right)$, with $\delta$
defined as $\forall (q_1, q_2)\in Q$, $e\in E$: $\delta(\left(q_1,
q_2), e\right)=\\ \left\{
\begin{array}{ll}
    \left(\delta_1(q_1, e), \delta_2(q_2, e)\right), & \hbox{if $\left\{\begin{array}{ll}
    \delta _1(q_1,e)!, \delta _2(q_2,e)! \\
     e\in E_1 \cap E_2
\end{array}\right.$};\\
    \left(\delta_1(q_1, e), q_2\right), & \hbox{if $\delta _1(q_1,e)!, e\in E_1 \backslash E_2$;} \\
    \left(q_1, \delta_2(q_2, e)\right), & \hbox{if $\delta _2(q_2,e)!, e\in E_2 \backslash E_1$;} \\
    \hbox{undefined}, & \hbox{otherwise.}
\end{array}\right.$

The parallel composition of $A_i$, $i=1,2,...,n$ is called
\emph{parallel distributed system} (or concurrent system), and is
defined based on the associativity property of parallel composition
\cite{Cassandras2008} as $\overset{n}{\underset{i=1}{\parallel}
}A_i=A_1\parallel\ ...\parallel\   A_n = A_n\parallel \left(A_{n-1}
\parallel \left( \cdots \parallel \left( A_2\parallel
A_1 \right)\right)\right)$.

The set of labels of local event sets containing an event $e$ is
called \emph{the set of locations} of $e$, denoted by $loc(e)$ and
is defined as $loc(e) =\{i\in\{1,\ldots, n\}| e\in E_i\}$.

Based on these definitions, a task automaton $A_S$ with event set
$E$ and local event sets
 $E_i$, $i=1,..., n$, $E = \overset{n}{\underset{i=1}{\cup} } E_i$, is
said to be decomposable with respect to parallel composition and
natural projections $P_i$, $i=1,\cdots, n$, when
$\overset{n}{\underset{i=1}{\parallel} } P_i \left( A_S \right)
\cong A_S$.

\subsection{Problem formulation}

In \cite{Automatica2010-2-agents-decomposability}, we have shown
that not all automata are decomposable with respect to parallel
composition and natural projections, and subsequently necessary and
sufficient conditions were proposed for decomposability of a task
automaton with respect to parallel composition and natural
projections into two local event sets. These necessary and
sufficient conditions were then generalized to an arbitrary finite
number of agents, in \cite{TAC2011-n-agents-decomposability}, as
\begin{lemma} (Corollary $1$ in
\cite{TAC2011-n-agents-decomposability}) \label{Decomposability
Corollary for n agents} A deterministic automaton $A_S = \left( {Q,
q_0 , E = \bigcup\limits_{i = 1}^n {E_i , \delta } } \right)$ is
decomposable with respect to parallel composition and natural
projections $P_i$, $i=1,...,n$ such that $A_S \cong \mathop
{||}\limits_{i = 1}^n P_i \left( {A_S } \right)$ if and only if
$A_S$ satisfies the following decomposability conditions ($DC$):
\begin{itemize}
\item $DC1$: $\forall e_1,
e_2 \in E, q\in Q$: $[\delta(q,e_1)!\wedge \delta(q,e_2)!]\\
\Rightarrow [\exists E_i\in\{E_1, \ldots, E_n\}, \{e_1,
e_2\}\subseteq E_i]\vee[\delta(q, e_1e_2)! \wedge \delta(q,
e_2e_1)!]$;
\item $DC2$: $\forall e_1, e_2 \in E,  q\in Q$, $s\in E^*$: $[\delta(q,
e_1e_2s)!\vee \delta(q, e_2e_1s)!]\\ \Rightarrow [\exists
E_i\in\{E_1, \ldots, E_n\}, \{e_1, e_2\}\subseteq E_i]\vee [
\delta(q, e_1e_2s)!\wedge \delta(q, e_2e_1s)!]$;
\item $DC3$: $\delta(q_0, \mathop |\limits_{i = 1}^n p_i \left( {s_i }
\right))!$, $\forall \{s_1, \cdots, s_n\}\in \tilde L\left( {A_S }
\right)$, $\exists s_i, s_j \in \{s_1, \cdots, s_n\}$, $s_i \neq
s_j$, where, $\tilde L\left( {A_S } \right) \subseteq L\left( {A_S }
\right)$ is the largest subset of $L\left( {A_S } \right)$ such that
$\forall s\in \tilde L\left( {A_S } \right)\exists s^{\prime} \in
\tilde L\left( {A_S } \right),\;\exists E_i, E_j  \in \left\{ {E_1
,...,E_n } \right\},
 i \ne j,p_{E_i  \cap E_j } \left( s \right)$ and $p_{E_i  \cap E_j } \left( s^{\prime} \right)$  start with the same
 event, and
\item $DC4$: $\forall i\in\{1,...,n\}$, $x, x_1, x_2 \in Q_i$, $x_1\neq x_2$,
$e\in E_i$, $t\in E_i^*$, $\delta_i (x, e)=  x_1$, $\delta_i (x, e)=
x_2$: $\delta_i (x_1, t)! \Leftrightarrow \delta_i(x_2, t)!$.
 \end{itemize}
\end{lemma}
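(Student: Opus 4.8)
Although Lemma~\ref{Decomposability Corollary for n agents} is quoted verbatim from \cite{TAC2011-n-agents-decomposability}, it is worth recalling the structure of its proof, since the four conditions $DC1$--$DC4$ drive the whole development of the present paper. The plan is to prove the two implications separately, using throughout that $A_S$ is deterministic. Write $\phi:Q\to\prod_{i=1}^{n}Q_{/\sim_{E_i}}$ for the map $\phi(q)=([q]_{E_1},\ldots,[q]_{E_n})$ that sends a state of $A_S$ to the tuple of its $\sim_{E_i}$-classes. A direct check shows that $\phi$ is a simulation relation from $A_S$ into $\mathop{||}\limits_{i=1}^{n}P_i(A_S)$: firing $e$ at $q$ moves coordinate $i$ by $e$ when $i\in loc(e)$ and leaves it unchanged otherwise, so the target tuple is again of the form $\phi(\cdot)$. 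Hence $A_S\prec\mathop{||}\limits_{i=1}^{n}P_i(A_S)$ holds with no hypotheses, and the entire content of the lemma lies in the reverse simulation together with the symmetry of the witnessing relation.

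\textbf{Necessity.} Assume $A_S\cong\mathop{||}\limits_{i=1}^{n}P_i(A_S)$. Since $A_S$ is deterministic, this forces the language equality $L(A_S)=\mathop{|}\limits_{i=1}^{n}p_i(L(A_S))$ and a state-by-state matching of the branching structure along $\phi$. For $DC1$ and $DC2$ I would argue by contraposition: if $e_1,e_2$ are enabled at $q$ (respectively, $e_1e_2s$ or $e_2e_1s$ is enabled at $q$) but no $E_i$ contains both, then $loc(e_1)\cap loc(e_2)=\emptyset$, so in the product $e_1$ and $e_2$ act on disjoint coordinates and enjoy the diamond property; starting from $\phi(q)$ one may therefore reorder them at the head of the string, after which the tail $s$ still follows, and by the language equality and determinism of $A_S$ the reordered string must be accepted by $A_S$, which is exactly $\delta(q,e_1e_2s)!\wedge\delta(q,e_2e_1s)!$. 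For $DC3$, any tuple $\{s_1,\ldots,s_n\}\in\tilde L(A_S)$ produces, through $\mathop{|}\limits_{i=1}^{n}p_i(s_i)$, a string accepted by the product and hence by $A_S$; restricting to $\tilde L(A_S)$ and to the case $s_i\neq s_j$ isolates exactly the interleavings of synchronized strings that are not already covered by pairwise reordering. For $DC4$, note that each $P_i(A_S)$ may be nondeterministic, because merging $\varepsilon$-related states can split a transition; if some $x,e$ admitted $\delta_i(x,e)=x_1\neq x_2=\delta_i(x,e)$ with a string $t$ enabled at exactly one of $x_1,x_2$, this asymmetric branching could be embedded into a reachable pair of product states that are language-equivalent but not bisimilar, contradicting bisimilarity with the deterministic $A_S$.

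\textbf{Sufficiency.} Assume $DC1$--$DC4$. The core step is to prove, by induction on the length of a reaching string, that every reachable state of $\mathop{||}\limits_{i=1}^{n}P_i(A_S)$ equals $\phi(q)$ for some $q\in Q$, and that $q$ can be chosen so the transitions match; then $R=\{(\phi(q),q)\mid q\in Q\}$, together with its converse, is the desired bisimulation. The base case is immediate since the initial product state is $\phi(q_0)$. In the inductive step, suppose a reachable product state equals $\phi(q)$ and $e$ is enabled there; enabledness means that for each $i\in loc(e)$ there is a state $q^{(i)}\sim_{E_i}q$ with $\delta(q^{(i)},e)!$ in $A_S$, where the $A_S$-path joining $q$ and $q^{(i)}$ uses only events outside $E_i$. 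One then shows that $e$ is in fact already enabled at $q$: this is where $DC1$ and $DC2$ do their work, allowing $e$ to be commuted back along the paths joining $q$ to the witnesses $q^{(i)}$, while $DC3$ accounts for the interleavings that involve several components at once and $DC4$ guarantees that the nondeterminism of the projections does not make any of these commutations ambiguous; determinism of $A_S$ then pins down $\delta(q,e)$ uniquely, and the resulting product state equals $\phi(\delta(q,e))$, closing the induction.

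I expect the main obstacle to be precisely this inductive step of making $e$ enabled at $q$: one must merge the possibly distinct witnesses $q^{(i)}$ over all components $i\in loc(e)$, control the new enabled strings that the successive commutations create (this is where the definition of $\tilde L(A_S)$ and condition $DC3$ are genuinely needed, rather than only $DC1$--$DC2$), and ensure that the nondeterminism introduced by the projections does not break the state correspondence, which is the precise role of $DC4$. Orchestrating the joint use of $DC1$--$DC4$ in this single step, far more than any individual string manipulation, is the delicate part of the argument.
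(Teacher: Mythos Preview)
The paper does not prove Lemma~\ref{Decomposability Corollary for n agents}: it is imported wholesale from \cite{TAC2011-n-agents-decomposability} and used as a black box, so there is no ``paper's own proof'' to compare against. Your write-up is therefore not a reproduction but an independent reconstruction, and you flag this yourself in the first sentence.

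As a reconstruction, your outline is sound in its broad strokes: the simulation $A_S\prec\mathop{||}\limits_{i=1}^{n}P_i(A_S)$ via $\phi$ is automatic, the necessity of $DC1$--$DC2$ follows from the diamond property of events with disjoint locations combined with language equality, and the roles you assign to $DC3$ (closing under synchronized interleavings) and $DC4$ (neutralizing local nondeterminism) are the right ones. One point deserves care, however. In the sufficiency direction you assert that every reachable state of the product \emph{equals} $\phi(q)$ for some $q\in Q$. This is generally false even under $DC1$--$DC4$: since $DC4$ permits each $P_i(A_S)$ to be genuinely nondeterministic (only requiring the resulting branches to be language-equivalent), firing $e$ at $\phi(q)$ may send coordinate $i$ to a class $[q']_{E_i}$ with $q'\not\sim_{E_i}\delta(q,e)$, so the successor tuple need not lie in the image of $\phi$. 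What one actually proves is that every reachable product state is \emph{bisimilar} to some $\phi(q)$; the relation $R$ you want is not $\{(\phi(q),q)\}$ but its closure under the harmless nondeterminism that $DC4$ licenses. Your parenthetical ``and that $q$ can be chosen so the transitions match'' gestures at this, but the surrounding statement should be weakened accordingly; otherwise the inductive step, as written, does not go through.
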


The first two decomposability conditions require the team to be
capable of decision on choice/order of events, by which for any such
decision there exists at least one agent that knows both events, or
the decision is not important. Moreover, the third and fourth
conditions, guarantee that the cooperative perspective of agents
from the tasks (parallel composition of local task automata) neither
allows a string that is prohibited by the global task automaton, nor
disables a string that is allowed in the global task automaton.

It was furthermore shown that once the task automaton is decomposed
into local task automata and local controllers are designed for
local plants to satisfy the local specifications, then the global
specification is guaranteed, by design.


 The next question
was the reliability of task decomposability to understand whether a
previously decomposable and achievable global task automaton, can
still remain decomposable and achievable by the team, after
experiencing some event failures. For this purpose, in
\cite{Automatica2011-Fault-tolerant}, \emph{a class of failures} was
investigated as follows to defined a notion of passivity. Consider
an automaton $A = (Q, q_0, E, \delta)$. An event $e\in E$ is said to
be failed in $A$ (or $E$), if $F(A) = P_{\Sigma}(A) = P_{E\backslash
e}(A) = (Q, q_0, \Sigma = E\backslash e, \delta^F)$, where,
$\Sigma$, $\delta^F$ and $F(A)$ denote the post-failure event set,
post-failure transition relation and post-failure automaton,
respectively. A set $\bar{E}\subseteq E$ of events is then said to
be failed in $A$, when for $\forall e\in \bar{E}$, $e$ is failed in
$A$, i.e., $F(A) = P_{\Sigma}(A_i) = P_{E\backslash \bar{E} }(A) =
(Q, q_0, \Sigma = E\backslash \bar{E}, \delta^F)$. Considering a
parallel distributed plant $A:= \overset{n}{\underset{i=1}{||} } A_i
= (Z, z_0, E = \overset{n}{\underset{i=1}{\cup} } E_i, \delta_{||})$
with local agents $A_i = (Q_i, q_0^i, E_i, \delta_i)$, $i = 1,
\ldots, n$. Failure of $e$ in $E_i$ is said to be \emph{passive} in
$E_i$ (or $A_i$) with respect to $\overset{n}{\underset{i=1}{||} }
A_i$, if $E = \overset{n}{\underset{i=1}{\cup} } \Sigma_i$. An event
whose failure in $A_i$ is a passive failure is called a passive
event in $A_i$.

The passivity was found to reflect the redundancy of communication
links and shown to be a necessary condition for preserving the
automaton decomposability. It was furthermore shown that when all
failed events are passive in the corresponding local event sets, the
problem of decomposability under event failure can be transformed
into the standard decomposability problem to find the conditions
under which $A_S \cong \mathop {||}\limits_{i = 1}^n
P_{E_i\backslash \bar{E}_i} (A_S )$, as follows.
\begin{lemma} (Theorem $1$ in \cite{Automatica2011-Fault-tolerant})\label{Decomposability under
event failure-Theorem} Consider a deterministic task automaton $A_S
= (Q, q_0, E = \mathop {\cup}\limits_{i = 1}^n E_i, \delta)$. Assume
that $A_S$ is decomposable, i.e., $A_S \cong \mathop {||}\limits_{i
= 1}^n P_i (A_S )$, and furthermore, assume that $\bar{E}_i =
\{a_{i,r}\}$ fail in $E_i$, $r\in\{1,...,n_i\}$, and $\bar{E}_i$ are
passive for $i\in\{1, \ldots, n\}$. Then, $A_S$ remains
decomposable, in spite of event failures, i.e., $A_S \cong \mathop
{||}\limits_{i = 1}^n F(P_i \left( {A_S } \right))$ if and only if
\begin{itemize}
\item $EF1$: $\forall e_1,
e_2 \in E, q\in Q$: $[\delta(q,e_1)!\wedge \delta(q,e_2)!]\\
\Rightarrow [\exists E_i\in\{E_1, \cdots, E_n\}, \{e_1,
e_2\}\subseteq E_i\backslash \bar{E}_i]\vee[\delta(q, e_1e_2)!
\wedge \delta(q, e_2e_1)!]$;
\item $EF2$: $\forall e_1, e_2 \in E,  q\in Q$, $s\in E^*$: $[\delta(q,
e_1e_2s)!\vee \delta(q, e_2e_1s)!]\\ \Rightarrow [\exists
E_i\in\{E_1, \cdots, E_n\}, \{e_1, e_2\}\subseteq E_i\backslash
\bar{E}_i]\vee [ \delta(q, e_1e_2s)!\wedge \delta(q, e_2e_1s)!]$;
\item $EF3$: $\delta(q_0, \mathop |\limits_{i = 1}^n p_i \left( {s_i }
\right))!$, $\forall \{s_1, \cdots, s_n\}\in \hat{L}\left( {A_S }
\right)$, $\exists s_i, s_j \in \{s_1, \cdots, s_n\}$, $s_i \neq
s_j$, where, $ \hat{L}\left( {A_S } \right) \subseteq L\left( {A_S }
\right)$ is the largest subset of $L\left( {A_S } \right)$ such that
$\forall s\in \hat{L}\left( {A_S } \right), \exists s^{\prime} \in
\hat{L}\left( {A_S } \right),\;\exists \Sigma_i$, $\Sigma_j  \in
\left\{ {\Sigma_1 ,...,\Sigma_n } \right\},
 i \ne j,p_{\Sigma_i  \cap \Sigma_j } \left( s \right)$ and
 $p_{\Sigma_i  \cap \Sigma_j } \left( s^{\prime} \right)$  start with the same
 event, and
\item $EF4$: $\forall i\in\{1, \ldots, n\}$, $x, x_1, x_2 \in Q_i$,
$x_1\neq x_2$, $e\in E_i\backslash \bar{E}_i$, $t_1, t_2 \in
\bar{E}_i^*$, $\delta_i (x, t_1e)= x_1$, $\delta_i (x, t_2e)= x_2$:
$\delta_i (x_1, t_1^{\prime})! \Leftrightarrow \delta_i(x_2,
t_2^{\prime})!$, for some $t_1^{\prime}$, $t_2^{\prime}$ such that
$p_{E_i\backslash \bar{E}_i}(t_1^{\prime}) = p_{E_i\backslash
\bar{E}_i}(t_2^{\prime})$.
\end{itemize}
\end{lemma}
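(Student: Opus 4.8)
\noindent\emph{Proof outline.}
The plan is to reduce the question ``is decomposability preserved after the failures?'' to a \emph{standard} decomposability question, and then to quote Lemma~\ref{Decomposability Corollary for n agents} after a mere renaming of the local event sets. Write $\Sigma_i:=E_i\backslash\bar{E}_i$ for the post-failure local event sets. First I would note that, directly from the definition of the failure operator, $F\bigl(P_i(A_S)\bigr)=P_{\Sigma_i}\bigl(P_i(A_S)\bigr)=P_{\Sigma_i}\bigl(P_{E_i}(A_S)\bigr)$, and then invoke the transitivity of natural projection on automata up to bisimulation: since $\Sigma_i\subseteq E_i$, first replacing the events of $E\backslash E_i$ by $\varepsilon$ and merging $\varepsilon$-related states, and then doing the same for the events of $E_i\backslash\Sigma_i$, produces an automaton bisimilar to the one obtained by replacing the events of $E\backslash\Sigma_i$ by $\varepsilon$ and merging in a single step; hence $F\bigl(P_i(A_S)\bigr)\cong P_{\Sigma_i}(A_S)$ for every $i$. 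Using that parallel composition preserves bisimulation, this yields $\mathop{||}\limits_{i=1}^{n}F\bigl(P_i(A_S)\bigr)\cong\mathop{||}\limits_{i=1}^{n}P_{\Sigma_i}(A_S)$, so that $A_S\cong\mathop{||}\limits_{i=1}^{n}F\bigl(P_i(A_S)\bigr)$ holds if and only if $A_S\cong\mathop{||}\limits_{i=1}^{n}P_{\Sigma_i}(A_S)$.

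The passivity hypothesis now enters in exactly one place: it is precisely the statement $E=\bigcup_{i=1}^{n}\Sigma_i$, so $\{\Sigma_1,\dots,\Sigma_n\}$ is again a family of local event sets covering $E$, and ``$A_S\cong\mathop{||}\limits_{i=1}^{n}P_{\Sigma_i}(A_S)$'' is literally the statement that the deterministic automaton $A_S$ is decomposable with respect to parallel composition and the natural projections $P_{\Sigma_i}$. Applying Lemma~\ref{Decomposability Corollary for n agents} to $A_S$ with the family $\{\Sigma_1,\dots,\Sigma_n\}$ in place of $\{E_1,\dots,E_n\}$ then shows that this holds if and only if $DC1$--$DC4$ hold with every occurrence of $E_i$ replaced by $\Sigma_i=E_i\backslash\bar{E}_i$. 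The remaining step is to check that the four rephrased conditions are exactly $EF1$--$EF4$. For $DC1$, $DC2$ and $DC3$ this is immediate by inspection: ``$\{e_1,e_2\}\subseteq E_i$'' becomes ``$\{e_1,e_2\}\subseteq E_i\backslash\bar{E}_i$'', and ``$p_{E_i\cap E_j}$'' becomes ``$p_{\Sigma_i\cap\Sigma_j}$'', so the distinguished sublanguage $\tilde L(A_S)$ turns into $\hat L(A_S)$, giving $EF1$, $EF2$ and $EF3$ respectively.

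The only identification that is not mechanical, and the step I expect to be the main obstacle, is that $DC4$ written for $P_{\Sigma_i}(A_S)$ is equivalent to $EF4$ written in terms of the transition relation $\delta_i$ of $P_i(A_S)=P_{E_i}(A_S)$: here a ``determinism of the bisimulation quotient'' condition stated on the states of $F(P_i(A_S))$ must be translated into a condition on $\delta_i$. I would use that the states of $F(P_i(A_S))$ are the $\sim_{\Sigma_i}$-equivalence classes of $Q_i$, where $q\sim_{\Sigma_i}q'$ whenever $q$ and $q'$ are linked by $\bar{E}_i$-labelled transitions; that a transition $[x]\overset{e}{\to}[x']$ with $e\in\Sigma_i$ comes from some $\delta_i(q_1,e)=q_1'$ with $q_1\in[x]$, $q_1'\in[x']$; and that two distinct successor classes $x_1\neq x_2$ reached from $[x]$ on the same $e$ correspond, after choosing suitable representatives, to $\delta_i(x,t_1e)=x_1$ and $\delta_i(x,t_2e)=x_2$ with $t_1,t_2\in\bar{E}_i^{*}$. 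Finally, equality of the languages over $\Sigma_i^{*}$ accepted from $x_1$ and from $x_2$ \emph{inside} $F(P_i(A_S))$ unwinds --- through the $\varepsilon$-closure defining the projection --- to ``$\delta_i(x_1,t_1')!\Leftrightarrow\delta_i(x_2,t_2')!$ for some $t_1',t_2'$ with $p_{E_i\backslash\bar{E}_i}(t_1')=p_{E_i\backslash\bar{E}_i}(t_2')$'', which is exactly $EF4$. The delicate point is that $EF4$ mentions only \emph{forward} $\bar{E}_i$-paths $t_1,t_2$ out of a common state $x$; to argue that this loses no generality I would analyse the $\bar{E}_i$-subgraph of $P_i(A_S)$, leaning on the standing hypothesis that $A_S$ was decomposable before the failures (so $DC1$--$DC4$ held for the original family $\{E_i\}$) together with the particular shape of natural-projection automata, to show that every state of a $\sim_{\Sigma_i}$-class is reachable from a common representative along forward $\bar{E}_i$-transitions, which collapses the general form of $DC4$ for $P_{\Sigma_i}(A_S)$ to the displayed one. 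Deriving $EF1$--$EF4$ from $DC1$--$DC4$ (and conversely) in this way then closes both directions.
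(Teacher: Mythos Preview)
Your approach matches the paper's own explanation: the paper does not prove Lemma~\ref{Decomposability under event failure-Theorem} here (it is quoted from \cite{Automatica2011-Fault-tolerant}), but the surrounding text states exactly your reduction --- that passivity allows one to ``transform the fault-tolerant task decomposability problem to the standard decomposability problem'' of checking $A_S\cong\mathop{||}_{i=1}^n P_{E_i\backslash\bar E_i}(A_S)$, and that ``$EF1$--$EF4$ are respectively the decomposability conditions $DC1$--$DC4$ \dots with respect to \dots refined local event sets $\Sigma_i=E_i\backslash\bar E_i$, provided passivity.'' Your identification of the passivity hypothesis with $E=\bigcup_i\Sigma_i$, the appeal to Lemma~\ref{Decomposability Corollary for n agents} for the family $\{\Sigma_i\}$, and your more careful unwinding of the $DC4$/$EF4$ correspondence all fill in details the present paper only gestures at; the strategy is the same.
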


$EF1$-$EF4$ are respectively the decomposability conditions
$DC1$-$DC4$, after event failures with respect to parallel
composition and natural projections into refined local event sets
$\Sigma_i = E_i \backslash \bar{E}_i$, $i\in\{1, \ldots, n\}$,
provided passivity of $\bar{E}_i$, $i\in\{1, \ldots, n\}$.

In this paper we are interested in the case that a task automaton is
not decomposable and would like to ask whether it is possible to
make it decomposable, and if so, whether the automaton can be made
decomposable with minimum number of communication links.
 This problem is
 formally stated as
\begin{problem}\label{Design Problem}
Consider a deterministic task automaton $A_S$ with event set $E =
\mathop {\cup}\limits_{i = 1}^n E_i$ for $n$ agents with local event
sets $E_i$, $i = 1,\ldots, n$. If $A_S$ is not decomposable, can we
modify the sets of private and shared events between local event
sets such that $A_S$ becomes decomposable with respect to parallel
composition and natural projections $P_i$, with the minimum number
of communication links?
\end{problem}

One trivial way to make an automaton $A$ decomposable, is to share
all events among all agents, i.e., $E_i = E$, $\forall i = 1,
\ldots, n$.
This method
, however, is equivalent to centralized control.
In general, in distributed large scale systems, one of the
objectives is to sustain the systems functionalities over as few
number of communication links as possible, as will be addressed in
the next section.

\section{TASK AUTOMATON DECOMPOSABILIZATION}\label{TASK AUTOMATON DECOMPOSABILIZATION}
\subsection{Motivating Examples}
This section is devoted to Problem \ref{Design Problem} and proposes
an approach to redefine the set of private and shared events among
agents in order to make an undecomposable task automaton
decomposable.
For more elaboration, let us to start with a motivating examples.

\begin{example} \label{Belt Conveyors Example}
Consider two sequential belt conveyors feeding a bin, as depicted in
Figure \ref{Belt}.
 To avoid the overaccumulation of materials on
Belt B, when the bin needs to be charged, at first Belt B and then
(after a few seconds), Belt A should be started. After filling the
bin, to stop the charge, first Belt A and then after a few seconds
Belt B is stopped to get completely emptied. The global task
automaton, showing the order of events in this plant, is shown in
Figure \ref{Belts Task Automaton}.
\begin{figure}[ihtp]
      \begin{center}
     \includegraphics[width=0.4\textwidth]{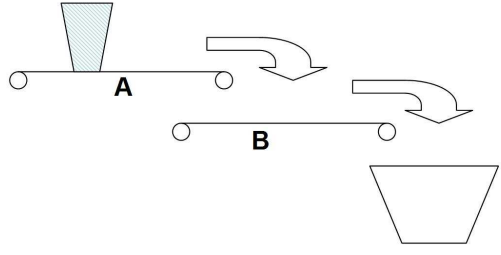}
        \caption{The process of two belt conveyors charging a bin.}
         \label{Belt}
        \end{center}
      \end{figure}
\begin{figure}[ihtp]
      \begin{center}
     $A_S$:  \xymatrix@C=0.5cm{
     \ar[r]&  \bullet \ar[r]^{B_{Start}} &  \bullet \ar[r]_{A_{Start}}& \bullet\ar[r]^{Bin_{Full}}&
     \bullet\ar[r]_{A_{Stop}}&\bullet\ar[r]^{B_{Stop}}&\bullet
\ar`dr_l[lllll]`_u[lllll]^{Bin_{Empty}}[lllll]
   }
        \caption{Global task automaton for belt conveyors and bin.}
 \label{Belts Task Automaton}
        \end{center}
      \end{figure}
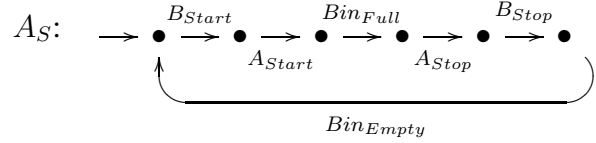

The local event sets for Belt A and Belt B are $E_A=\{A_{Start},
Bin_{Full}, A_{Stop}\}$ and $E_B=\{B_{Start}, B_{Stop},
Bin_{Empty}\}$, respectively, with $A_{Start}$:= Belt A start;
$Bin_{Full}$:= Bin full; $A_{Stop}$:= Belt A stop and wait for 10
Seconds; $B_{Start}$:= Belt B start and wait for 10 Seconds;
$B_{Stop}$:= Belt B stop, and $Bin_{Empty}$: Bin empty.

The task automaton is not decomposable with respect to parallel
composition and natural projection $P_i$, $i\in \{A,B\}$, due to
violation of DC by successive private event pairs $\{B_{Start},
A_{Start} \}$ and $\{A_{Stop}, B_{Stop}\}$. To make $A_S$
decomposable, $(B_{Start}\vee A_{Start} )\wedge(A_{Stop}\vee
B_{Stop})$ should become common between $E_A$ and $E_B$. Therefore,
four options are possible: $(B_{Start} \wedge B_{Stop})$,
$(B_{Start}\wedge A_{Stop})$, $(A_{Start} \wedge B_{Stop})$, or
$(A_{Start} \wedge A_{Stop})$ become common. In each of these
options two private events should become common, and hence, all four
options are equivalent in the sense of optimality. Consider for
example $A_{Start}$ and $A_{Stop}$ to become common. In this case
the new local event sets are formed as $E_A=\{A_{Start}, Bin_{Full},
A_{Stop}\}$ and $E_B=\{B_{Start}, B_{Stop}, Bin_{Empty}, A_{Start},
A_{Stop}\}$. The automaton $A_S$ will then become decomposable
(i.e., $P_A(A_S)||P_B(A_S)\cong A_S$) with the new local event sets
with the corresponding local task automata as are shown in Figure
\ref{Belts Task Automaton Decomposition 1}.
\begin{figure}[ihtp]
      \begin{center}
     $P_A(A_S)$: \xymatrix@C=0.5cm{
     \ar[r]&  \bullet \ar[rr]^{A_{Start}}&&
     \bullet\ar[rr]^{Bin_{Full}}&&\bullet
    \ar`dr_l[llll]`_u[llll]^{A_{Stop}}[llll]
   },\   \   $P_B(A_S)$:\xymatrix@C=0.5cm{
     \ar[r]&  \bullet \ar[r]^{B_{Start}} &  \bullet \ar[r]_{A_{Start}}&
     \bullet\ar[r]^{A_{Stop}}&\bullet\ar[r]_{B_{Stop}}&\bullet
\ar`dr_l[llll]`_u[llll]^{Bin_{Empty}}[llll]
   }
           \caption{Local task automata for belt conveyors, with $E_A=\{A_{Start}, Bin_{Full},
A_{Stop}\}$ and $E_B=\{B_{Start}, B_{Stop}, Bin_{Empty}, A_{Start},
A_{Stop}\}$.}
 \label{Belts Task Automaton
Decomposition 1}
        \end{center}
      \end{figure}
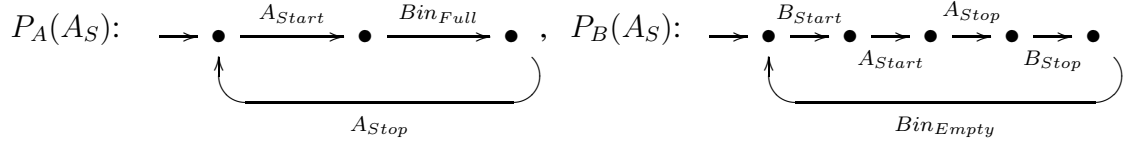
\end{example}

In this example, different sets of private events can be chosen to
make $A_S$ decomposable. All of these sets have the same
cardinality, and hence, no optimality is arisen in this example.
Next example shows a case with different choices of private event
sets to be shared, suggesting optimal decomposition by choosing the
set with the minimum cardinality.
\begin{example}\label{Optimality Concept}
Consider two local event sets $E_1=\{e_1, e_3\}$ and $E_2=\{e_2\}$,
with the global task automaton \xymatrix@R=0.1cm{
                \ar[r]&  \bullet \ar[r]^{e_2} \ar[dr]_{e_1}&\bullet \ar[r]^{e_3}&
     \bullet \\
             && \bullet}.
This automaton is undecomposable due to violation of DC by $e_2 \in
E_2\backslash E_1$ and $\{e_1, e_3\}\in E_1\backslash E_2$. To make
it decomposable, one event among the set $\{e_1, e_2\}$ and another
event among the set $\{e_2, e_3\}$ (either $\{e_2\}$ or $\{e_1,
e_3\}$) should become common. Therefore, in order for optimal
decomposabilization, $\{e_2\}$ is chosen to become common due to its
minimum cardinality. It is obvious that in this case only one event
should become common while if $\{e_1, e_3\}$ was chosen, then two
events were required to be shared.
\end{example}

Motivated by these examples, the core idea in our
decompozabilization approach is to first check the decomposability
of a given task automaton $A_S$, by Lemma \ref{Decomposability
Corollary for n agents}, and if it is not decomposable, i.e., either
of $DC1$-$DC4$ is violated then the proposed method is intended to
make $A_S$ decomposable, by eradicating the reasons of dissatisfying
of decomposability conditions. We will show that violation of
decomposability conditions, can be rooted from two different
sources: it can be because of over-communication among agents, that
may lead to violation of $DC3$ or/and $DC4$, or due to lack of
communication, that may lead to violation of $DC1$, $DC2$, $DC3$
or/and $DC4$. Accordingly, decomposability can be enforced using two
methods of link deletion and link addition, subjected to the type of
undecomposability. Considering link deletion as an intentional event
failure, according to Lemma \ref{Decomposability under event
failure-Theorem} a link can be deleted only if it is passive and its
deletion respects $EF1$-$E4$. On the other hand, the second method
of enforcing of decomposability, i.e., establishing new
communication links, may result in new violations of $DC3$ or $DC4$,
that should be treated, subsequently.

In order to proceed the approach, we firstly introduce four basic
definitions to \emph{detect} the components that contribute in
violation of each decomposability condition and then propose basic
lemmas through which the communication links, and hence the local
event sets are modified to \emph{resolve} the violations of
decomposability conditions.

\subsection{Enforcing $DC1$ and $DC2$}
This part deals with enforcing of $DC1$ and $DC2$. For this purpose,
the set of events that violate $DC1$ or $DC2$ is defined as follows.
%
%

\begin{definition}($DC1\&2$-Violating set)\label{DC1&2-Violating
set} Consider the global task automaton $A_S$ with local event sets
$E_i$ for $n$ agents such that
$E=\overset{n}{\underset{i=1}{\cup}}E_i$. Then, the
$DC1\&2$-Violating set operator $V: A_S\to E \times E$, indicates
the set of event pairs that violate $DC1$ or $DC2$ (violating
pairs), and is defined as $V(A_S):=\{\{e_1, e_2\}| e_1, e_2\in E,
\forall E_i\in\{E_1, \ldots, E_n\}, \{e_1, e_2\}\not\subset E_i,
\exists q\in Q$ such that $\delta(q,e_1)!\wedge \delta(q,e_2)!\wedge
\neg[\delta(q, e_1e_2)!\wedge \delta(q, e_2e_1)!]$ or
$\neg[\delta(q, e_1e_2s)!\Leftrightarrow \delta(q, e_2e_1s)!]\}$,
for some $s\in E^*$. Moreover, $W: A_S\to E$ is defined as
$W(A_S):=\{e\in E| \exists e^{\prime}\in E$ such that $\{e,
e^{\prime}\}\in V(A_S)\}$, and shows the set of events that
contribute in $V(A_S)$ (violating events).
For a particular event $e$ and a specific local event set $E_i \in
\{E_1, \ldots, E_n\}$, $W_e(A_S, E_i)$ is defined as $W_e(A_S, E_i)
= \{e^{\prime} \in E_i|\{e, e^{\prime}\}\in V(A_S)\}$. This set
captures the collection of events from $E_i$ that pair up with $e$
to contribute in violation of $DC1$ or $DC2$. The cardinality of
this set will serve as an index for optimal addition of
communication links to make $V(A_S)$ empty.
\end{definition}

This definition suggests a way to remove a pair of events $\{e_1,
e_2\}$ from $V(A_S)$, by sharing $e_1$ with one of the agents in
$loc(e_2)$ or by sharing $e_2$ with one of the agents in $loc(e_1)$.
Once there exist an agent that knows both event, $loc(e_1)\cap
loc(e_2)$ becomes nonempty and $e_1$ and $e_2$ no longer contribute
in violation of $DC1$ or $DC2$ since $[\exists E_i\in\{E_1, \ldots,
E_n\}, \{e_1, e_2\}\subseteq E_i]$ becomes true for $e_1$ and $e_2$
in Lemma \ref{Decomposability Corollary for n agents}. Therefore,
\begin{lemma}\label{Resolution of DC1 and DC2 violation}
The set $V(A_S)$ becomes empty, if for any $\{e, e^{\prime}\}\in
V(A_S)$, $e$ is included in $E_i$ for some $i\in loc(e^{\prime})$,
or $e^{\prime}$ is included in $E_j$ for some $j\in loc(e)$. In this
case, $\{e, E_i\}$ or $\{e^{\prime}, E_j\}$ is called a $DC1\&
2$-enforcing pair for $DC1\& 2$-violating pair $\{e, e^{\prime}\}$.
\end{lemma}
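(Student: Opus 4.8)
The plan is to argue directly from Definition~\ref{DC1&2-Violating set}, exploiting the fact that $V(A_S)$ can only shrink when local event sets are enlarged, and that each prescribed enforcing pair neutralises precisely the violating pair to which it is assigned.

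First I would separate the part of the defining predicate of $V(A_S)$ that depends on the local event sets from the part that does not. A pair $\{e_1,e_2\}$ lies in $V(A_S)$ exactly when (i) no local event set contains both $e_1$ and $e_2$, and (ii) there is a state $q\in Q$ --- and, in the $DC2$ alternative, a string $s\in E^*$ --- with $\delta(q,e_1)!\wedge\delta(q,e_2)!\wedge\neg[\delta(q,e_1e_2)!\wedge\delta(q,e_2e_1)!]$, respectively $\neg[\delta(q,e_1e_2s)!\Leftrightarrow\delta(q,e_2e_1s)!]$. Clause (ii) mentions only $Q$, $\delta$ and $E$, none of which is affected by the modification in the statement: that modification merely inserts, into already existing local event sets, events that are already in $E$, so $E=\bigcup_i E_i$ is unchanged. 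Hence (ii) is invariant, whereas (i) can only turn from true to false as more events become shared. Therefore the set $V(A_S)$ computed with the enlarged local event sets is contained in the original $V(A_S)$; in particular, no modification ever creates a new violating pair.

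Second, I would check that every member of the original $V(A_S)$ disappears. Let $\{e,e'\}\in V(A_S)$. By hypothesis an enforcing pair has been applied to it, so either $e$ has been added to some $E_i$ with $i\in loc(e')$, or $e'$ has been added to some $E_j$ with $j\in loc(e)$. In the first case $e'\in E_i$ already held, since $i\in loc(e')$; as events are only added, never removed, $E_i$ contains both $e$ and $e'$ after the modification, so clause (i) fails for $\{e,e'\}$. The second case is symmetric. Combined with the containment established above, this shows that the modified $V(A_S)$ is empty, i.e. $DC1$ and $DC2$ of Lemma~\ref{Decomposability Corollary for n agents} are now satisfied.

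I expect the only delicate point to be the ``no new violations'' step: one might fear that widening some $E_i$ disturbs an unrelated pair. What rules this out is precisely the observation that the witness clause (ii) is a property of $A_S$ alone, so the family of $DC1\&2$-violating sets is non-increasing along any chain of event insertions. I would also stress what the lemma does \emph{not} claim: it says nothing about $DC3$ or $DC4$, and indeed the events shared here may later have to be weighed against new $DC3$/$DC4$ violations, as the surrounding discussion foreshadows.
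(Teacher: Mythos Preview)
Your argument is correct and follows the same line as the paper, which does not give a separate formal proof but justifies the lemma in the paragraph immediately preceding it: once an enforcing pair is applied, $\{e,e'\}\subseteq E_i$ for some $i$, so the disjunct $[\exists E_i,\ \{e_1,e_2\}\subseteq E_i]$ in $DC1$/$DC2$ becomes true and the pair drops out of $V(A_S)$. Your version is simply more explicit, in particular in isolating the monotonicity point that enlarging local event sets can never create new $DC1\&2$-violating pairs, which the paper leaves implicit.
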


\begin{example}\label{Resolution of DC1 and DC2-Example }
In Example \ref{Optimality Concept}, $V(A_S) = \{\{e_1, e_2\},
\{e_2, e_3\}\}$, $W(A_S) = \{e_1, e_2, e_3\}$. Including $e_2$ in
$E_1$ vanishes $V(A_S)$ and makes $A_S$ decomposable.
\end{example}

However, applying Lemma \ref{Resolution of DC1 and DC2 violation}
may offer different options for event sharing, since pairs in
$V(A_S)$ may share some events. In this case, the minimum number of
event conversions would be obtained by forming a set of events that
are most frequently shared between the violating pairs. This gives
the minimum cardinality for the set of private events to be shared,
leading to minimum number of added communication links. Such choice
of events offers a set of events that span all violating pairs.
These pairs are captured by $W_e(A_S, E_i)$ for any event $e$. In
order to minimize the number of added communication links for
vanishing $V(A_S)$, one needs to maximize the number of deletions of
pairs from $V(A_S)$ per any link addition. For this purpose, for any
event $e$, $W_e(A_S, E_i)$ is formed to understand the frequency of
appearance of $e$ in $V(A_S)$ for any $E_i$, and then, the event set
$E_i$ with maximum $|W_e(A_S, E_i)|$ is chosen to include $e$ (Here,
$|.|$ denotes the set's cardinality). In this case, inclusion of $e$
in $E_i$ will delete as many pairs as possible from $V(A_S)$.

Interestingly, these operators can be represented using graph theory
as follows. A graph $G=(W,\Sigma)$ consists of a node set $W$ and an
edge set $\Sigma$, where an edge is an unordered pair of distinct
vertices. Two nodes are said to be adjacent if they are connected
through an edge, and an edge is said to be incident to a node if
they are connected. The valency of a node is then defined as the
number of its incident edges \cite{Godsil2001}. Now, since we are
interested in removing the violating pairs by making one of their
events to be shared, it is possible to consider the violating events
as nodes of a graph such that two nodes are adjacent in this graph
when they form a violating pair. This graph is formally defined as
follows.

\begin{definition}($DC1\&2$-Violating Graph)\label{DC1,2-Violating Graph}
Consider a deterministic automaton $A_S$. The $DC1\&2$-Violating
graph, corresponding to $V(A_S)$, is a graph $G(A_S)=(W(A_S),
\Sigma)$. Two nodes $e_1$ and $e_2$ are adjacent in this graph when
$\{e_1, e_2\}\in V(A_S)$.
\end{definition}
In this formulation, the valency of each node $e$ with respect to a
local event set $E_i\in \{E_1, \ldots, E_n\}$ is determined by
$val(e, E_i)=|W_e(A_S, E_i)|$. When $e$ is included into $E_i$, it
means that all violating pairs containing $e$ and events from $E_i$
are removed from $V(A_S)$, and equivalently, all corresponding
incident edges are removed from $G(A_S)$. For this purpose,
following algorithm finds the set with the minimum number of private
events to be shared, in order to satisfy $DC1$ and $DC2$. The
algorithm is accomplished on graph $G(A_S)$, by finding $e$ and
$E_i$ with maximum $|W_e(A_S, E_i)|$ and including $e$ in $E_i$,
deleting all edges from $e$ to $E_i$, updating $W(A_S)$, and
continuing until there is not more edges in $G(A_S)$ to be deleted.

\begin{algorithm}
\label{minimum spanning nodes-n agents} \ \begin{enumerate}
\item For a deterministic automaton $A_S$, with local event sets
$E_i$, $i = 1, \ldots, n$, violating $DC1$ or $DC2$, form the
$DC1\&2$-Violating graph ; set $E_i^0 = E_i$, $i = 1, \ldots, n$;
$V^0(A_S) = V(A_S)$; $W^0(A_S) = W(A_S)$; $G^0(A_S)=(W(A_S),
\Sigma)$; k=1;
\item Among all events in the nodes in $W^{k-1}(A_S)$, find $e$ with
the maximum $|W_e^{k-1}(A_S, E_i^{k-1})|$, for all $E_i^{k-1}\in
\{E_1^{k-1}, \ldots, E_n^{k-1}\}$;
\item $E_i^k = E_i^{k-1} \cup \{e\}$;  and delete all edges from $e$ to $E_i^k$;
\item update $W_e^k(A_S, E_i)$ for all nodes of $G(A_S)$;
\item set $k=k+1$ and go to step $(2)$;
\item continue, until there exist no edges.
\end{enumerate}

\end{algorithm}

This algorithm successfully terminates due to finite set of edges
and nodes in the graph $G(A_S)$ and enforces $A_S$ to satisfy $DC1$
and $DC2$ as

\begin{lemma}\label{minimum spanning nodes-n agents-lemma}
Algorithm \ref{minimum spanning nodes-n agents} leads $A_S$ to
satisfy $DC1$ and $DC2$ with minimum addition of communication
links. Moreover if $A_S$ satisfies $DC3$ and $DC4$ and $E_i^k =
E_i^{k-1} \cup \{e\}$ in Step $3$ does not violate $DC3$ and $DC4$
in all iterations, then Algorithm \ref{minimum spanning nodes-n
agents} makes $A_S$ decomposable with minimum addition of
communication links.
\end{lemma}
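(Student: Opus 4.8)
The plan is to split the statement into three parts: (a) Algorithm~\ref{minimum spanning nodes-n agents} terminates and, upon termination, $A_S$ satisfies $DC1$ and $DC2$; (b) the number of link additions it performs is minimal among all event redistributions that enforce $DC1$ and $DC2$; and (c) under the extra hypotheses on $DC3$ and $DC4$, the resulting $A_S$ is decomposable, still with that minimal number of added links.

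For (a), the crucial observation is a monotonicity property of the violating set: including a single event $e$ into some $E_i$ can only \emph{delete} pairs from $V(A_S)$ and never create new ones. Indeed, the transition clauses in the definition of the $DC1\&2$-Violating set depend only on $A_S$ itself and not on the local event sets, while the clause ``$\forall E_i,\{e_1,e_2\}\not\subseteq E_i$'' can only pass from true to false as events are added; moreover the pairs actually removed by Step~3 of iteration $k$ are exactly those counted by $W_e^{k-1}(A_S,E_i^{k-1})$, i.e. by the maximal valency selected in Step~2. Hence each iteration deletes at least one edge of the $DC1\&2$-Violating graph (Definition~\ref{DC1,2-Violating Graph}) as long as edges remain, and since the edge set is finite the loop halts with $V^K(A_S)=\emptyset$. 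By the definitions, $V(A_S)=\emptyset$ is equivalent to: every pair $\{e_1,e_2\}$ satisfying the transition clauses is contained in some common $E_i$ --- which is precisely the disjunct needed to validate $DC1$ and $DC2$ in Lemma~\ref{Decomposability Corollary for n agents} (this also makes the appeal to Lemma~\ref{Resolution of DC1 and DC2 violation} explicit). So $DC1$ and $DC2$ now hold.

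For (b), I would first record a lower bound: for any redistribution that enforces $DC1$ and $DC2$ and any original violating pair $\{e_1,e_2\}$ (which has $loc(e_1)\cap loc(e_2)=\emptyset$), the redistribution must include $e_1$ into some $E_i$ with $i\in loc(e_2)$ or $e_2$ into some $E_j$ with $j\in loc(e_1)$; equivalently, the inclusions it performs must cover every edge of $G(A_S)$. I would then compare with an optimal redistribution by an exchange argument, using that (i) at each iteration the greedy rule removes a maximum number of still-unresolved violating pairs, and (ii) inclusions accumulate inside a single $E_i$, so that performing a later inclusion against an already-enlarged $E_i$ is never costlier than performing it earlier; this lets one rewrite any optimal solution into the sequence produced by Algorithm~\ref{minimum spanning nodes-n agents} without increasing the total number $\sum_i|S_i|$ of inclusions. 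I expect this step to be the main obstacle: the resolution problem is a covering problem on $G(A_S)$ whose ``sets'' (the edges incident to a candidate inclusion) grow adaptively as the $E_i$ grow, and greedy maximum-coverage is not optimal for arbitrary covering instances, so the argument must genuinely exploit that $G(A_S)$ is an ordinary graph (edges, not hyperedges) together with the monotone growth of the local event sets.

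For (c), assume in addition that $A_S$ satisfies $DC3$ and $DC4$ and that no Step~3 inclusion $E_i^k=E_i^{k-1}\cup\{e\}$ violates $DC3$ or $DC4$. An induction on the iteration index $k$ then shows $DC3$ and $DC4$ hold after every iteration: they hold at $k=0$ by hypothesis, and each iteration preserves them by assumption. Combined with part~(a), at termination $A_S$ satisfies $DC1$--$DC4$ with respect to parallel composition and the natural projections into the updated sets $E_i^K$, so by Lemma~\ref{Decomposability Corollary for n agents} $A_S$ is decomposable. Finally, since decomposability forces $DC1$ and $DC2$, hence $V(A_S)=\emptyset$, any decomposabilizing redistribution must use at least the minimum number of inclusions needed to empty $V(A_S)$; part~(b) shows the algorithm attains exactly that bound while achieving decomposability, so it is optimal for decomposabilization as well.
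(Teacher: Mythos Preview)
Your parts (a) and (c) are correct and in fact more carefully spelled out than the paper, which says essentially nothing about termination and handles the $DC3$/$DC4$ clause in one sentence at the end.

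The divergence is entirely in part (b), where you sketch an exchange argument and candidly flag it as incomplete. The paper does \emph{not} use an exchange argument. Instead it views a run of the algorithm as a partition of the edge set $\Sigma$ of $G(A_S)$ into blocks $\Delta\Sigma^1,\ldots,\Delta\Sigma^K$, where $\Delta\Sigma^k$ is the set of edges deleted at iteration $k$, so that $\sum_{k=1}^K|\Delta\Sigma^k|=|\Sigma|$. It then proves a small auxiliary lemma: for two non-increasing positive sequences $(a_k)_{k=1}^N$ and $(b_k)_{k=1}^N$, the inequality $\sum a_k<\sum b_k$ forces $a_k<b_k$ for some $k$. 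The optimality argument is by contradiction: if an alternative procedure empties $\Sigma$ in $K'<K$ steps with blocks $\Delta'\Sigma^k$, then both totals equal $|\Sigma|$, hence $\sum_{k=1}^{K'}|\Delta\Sigma^k|<\sum_{k=1}^{K'}|\Delta'\Sigma^k|$ (the tail $\sum_{k>K'}|\Delta\Sigma^k|$ is strictly positive), and the auxiliary lemma yields an index $k$ with $|\Delta\Sigma^k|<|\Delta'\Sigma^k|$, which the paper reads as contradicting the maximality of the greedy choice at step $k$.

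Your worry that ``greedy maximum-coverage is not optimal for arbitrary covering instances'' is precisely the point at which the paper's argument is delicate: the inequality $|\Delta\Sigma^k|<|\Delta'\Sigma^k|$ compares the greedy's $k$-th block against the alternative's $k$-th block, but these are computed in \emph{different} residual graphs (after removing $\Delta\Sigma^1,\ldots,\Delta\Sigma^{k-1}$ versus $\Delta'\Sigma^1,\ldots,\Delta'\Sigma^{k-1}$), so one still has to argue why $|\Delta'\Sigma^k|$ was an option available to the greedy at its step $k$. In other words, the paper's counting scheme gives you the \emph{form} of the missing step in your plan, but the substantive issue you identified---comparing across different residual states---remains the crux in either framing. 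If you adopt the paper's approach, you should make explicit why the alternative's block sizes, once sorted non-increasingly, are termwise dominated by the greedy's; that is the bridge neither your exchange sketch nor the paper's write-up fully builds.
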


\begin{proof}
See the Appendix for proof.
\end{proof}

\begin{remark}(Special case: Two agents)\label{valency}
For the case of two agents, since they are only two local event
sets, for all $\{e, e^{\prime}\}\in V(A_S)$, $e$ and $e^{\prime}$
are from different local event sets, and hence, for $n = 2$,
$|W_e(A_S, E_i)|$ is equivalent to $val(e)$, and addition of $e$
into $E_i$ in each step implies the deletion of all incident edges
of $e$.
\end{remark}

\begin{remark}\label{Resolution of DC1 and DC2 may violate DC3 or DC4}
Although Algorithm \ref{minimum spanning nodes-n agents} leads $A_S$
to satisfy $DC1$ and $DC2$, it may cause new violations of $DC3$
or/and $DC4$, due to establishing new communication links.
\end{remark}

\begin{example}\label{Example-Resolution of DC1 and DC2 may violate DC3 or DC4}
Consider a task automaton $A_S$:\\ \xymatrix@R=0.1cm{
               \bullet &\bullet \ar[l]_{e_5} &\bullet\ar[l]_{e_1}&
                \ar[r] & \bullet\ar[dr]_{e_2}\ar[dl]^{a}
                \ar[r]^{e_1}& \bullet
               \ar[r]^{e_2} & \bullet\\
               & \bullet & \bullet \ar[l]^b& \bullet
               \ar[l]^{e_1}\ar[ul]_{e_3} && \bullet \ar[r]_{e_4}& \bullet
               \ar[r]_{e_6}&\bullet} with local event sets $E_1 = \{a, b, e_1,\\ e_3,
               e_5\}$ and $E_2 = \{a, b, e_2, e_4, e_6\}$. Both
               $DC1$ and $DC2$ are violated by event pair $\{e_1,
               e_2\}$ when they require decision on a choice and a
               decision on their order from the initial state,
               while none of the agents knows both of them. To
               vanish $V(A_S) = \{\{e_1, e_2\}\}$, two enforcing pairs
               are suggested: $\{e_1, E_2\}$ ($e_1$ to be included in
               $E_2$) or $\{e_2, E_1\}$ ($e_2$ to be included in
               $E_1$). However, inclusion of $e_1$ in $E_2$, cause a
               new violation of $DC4$ since with new $E_2 = \{a, b,
               e_1, e_2, e_4, e_6\}$, $P_2(A_S)$ is obtained as
               $P_2(A_S)$: \xymatrix@R=0.1cm{
               & \bullet& \ar[r]&\bullet \ar[dl]^{a} \ar[r]^{e_1}& \bullet
               \\
            \bullet & \bullet \ar[l]^b& \bullet
               \ar[l]^{e_1}\ar[ul]_{e_1}}, violating $DC4$, due to
               new nondeterminism, for which $e_3$ also is required
               to be included to $E_2$ in order to make $A_S$
               decomposable. On the other hand, if instead of
               including $e_1$ in $E_2$, one included $e_2$ in
               $E_1$, then besides violation of $DC4$ (as there does not
               exists a deterministic automaton that bisimulates
               $P_2(A_S)$), new violations of $DC3$ emerged, as with
               new event set $E_1 = \{a, b, e_1, e_2, e_3, e_5\}$, the
               parallel composition of $P_1(A_S)$: \xymatrix@R=0.1cm{
               \bullet &\bullet \ar[l]_{e_5} &\bullet\ar[l]_{e_1}&
                \ar[r] & \bullet\ar[dr]_{e_2}\ar[dl]^{a}
                \ar[r]^{e_1}& \bullet
               \ar[r]^{e_2} & \bullet\\
               & \bullet & \bullet \ar[l]^b& \bullet
               \ar[l]^{e_1}\ar[ul]_{e_3} && \bullet} and
               $P_2(A_S)$:\\
               \xymatrix@R=0.1cm{
               &&  \ar[r] & \bullet\ar[dr]_{e_2}\ar[dl]^{a}
                \ar[r]^{e_2}& \bullet\\
               & \bullet & \bullet \ar[l]_b
             && \bullet \ar[r]^{e_4}& \bullet
               \ar[r]^{e_6}&\bullet} produces string $e_1e_2e_4e_6$
               that does not appear in $A_S$. To make $A_S$ decomposable, we also need
               to include $e_1$ and $e_3$ in $E_2$.
\end{example}

\subsection{Enforcing $DC3$} Lemma \ref{Resolution of DC1 and DC2
violation} proposes adding communication links to make $DC1$ and
$DC2$ satisfied. Next step is to deal with violations of $DC3$. In
contrast to the cases for $DC1$ and $DC2$, violation of $DC3$ can be
overcome either by disconnecting one of its communication links to
prevent the illegal synchronization of strings, or by introducing
new shared events to fix strings and avoid illegal interleavings.

To handle violation of $DC3$, we firstly define the set of tuples
that violate $DC3$ as follows.

\begin{definition}($DC3-violating$ tuples)\label{DC3-violating Events}
Consider a deterministic automaton $A_S$, satisfying $DC1$ and $DC2$
 and let $\tilde L\left(
{A_S } \right) \subseteq L\left( {A_S } \right)$ be the largest
subset of $L\left( {A_S } \right)$ such that $\forall s\in \tilde
L\left( {A_S } \right)\exists s^{\prime} \in \tilde L\left( {A_S }
\right),\;\exists E_i, E_j  \in \left\{ {E_1 ,...,E_n } \right\},
 i \ne j,p_{E_i  \cap E_j } \left( s \right)$ and
  $p_{E_i  \cap E_j } \left( s^{\prime} \right)$
start with the same event $a\in E_i \cap E_j$. For any such $E_i$,
$E_j$ and $a$,
 if $\exists \{s_1, \cdots, s_n\}\in L\left( {A_S } \right)$,
$\exists s_i, s_j \in \{s_1, \cdots, s_n\}, s_i \neq s_j$, $s_i, s_j
\in \tilde L\left( {A_S } \right)$, $\neg\delta(q_0, \mathop
|\limits_{i = 1}^n p_i \left( {s_i } \right))!$, then
 $a$ is called a $DC3-violating$ event with respect to $s_1$, $s_2$, $E_i$ and
$E_j$, and  $(s_1, s_2, a,  E_i, E_j)$ is called a $DC3$-violating
tuple. The set of all $DC3-violating$ tuples is denoted by $DC3-V$
and defined as $DC3-V = \left\{(s_1, s_2, a,  E_i, E_j)|e \right.$
is a DC3-violating event with respect to $s_1$, $s_2$, $E_i$ and
$E_j\left.\right\}$.
\end{definition}

Any violation in $DC3$ can be interpreted in two ways: firstly, it
can be seen as over-communication of shared event $a$ that lead to
synchronization of $s_1$ and $s_2$ in $(s_1, s_2, a,  E_i, E_j)$ and
emerging illegal interleaving strings from composition of $P_i(A_S)$
and $P_j(A_S)$. In this case, if event $a$ is excluded from $E_i$ or
$E_j$, then  $a$ will no longer contribute in synchronization to
generate illegal interleavings, and hence, $(s_1, s_2, a,  E_i,
E_j)$ will no longer remain a $DC3$-violating tuple. However,
exclusion of $a$ from $E_i$ or $E_j$ is allowed, only if it is
passive (exclusion is considered as an intentional event failure)
and does not violate $EF1$-$EF4$. The second interpretation reflects
a violation of $DC3$ as a lack of communication, such that if for
any $DC3$ violating tuple $(s_1, s_2, a,  E_i, E_j)$, one event that
appears before $a$ in $s_1$ or $s_2$, is shared between $E_j$ and
$E_j$, then $P_i(A_S)$ and $P_j(A_S)$ will have enough information
to distinguish $s_1$ and $s_2$ to prevent illegal interleaving of
strings.
 Two methods for resolving
the violation of $DC3$ can be therefore stated as the following
lemma.
\begin{lemma}\label{Resolution of DC3 violation}
Consider an automaton $A_S$, satisfying $DC1$ and $DC2$. Then any
$DC3$-violating tuple $(s_1, s_2, a,  E_i, E_j)$ is overcome, when:
\begin{enumerate}
\item $a$ is excluded from $E_i$ or $E_j$ (eligible if it respects passivity and $EF1$-$EF4$), or
\item if $\exists b\in (E_i\cup E_j)\backslash (E_i\cap E_j)$ that appears before $a$
in only one of $s_1$ and $s_2$, then $b$ is included in $E_i\cap
E_j$, otherwise, pick $e_1\in p_{E_i\cup E_j}(s_1)$, $e_2\in
p_{E_i\cup E_j}(s_2)$, such that $e_1 \neq e_2$, $e_1$, $e_2$ appear
before $a$ in $s_1$ and $s_2$, are included in $E_i\cap E_j$.
\end{enumerate}
\end{lemma}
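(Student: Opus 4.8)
The plan is to verify that each of the two modifications removes the tuple $(s_1,s_2,a,E_i,E_j)$ from $DC3\text{-}V$, by tracing through Definition \ref{DC3-violating Events} and checking that the string $s$ witnessing membership of $s_1$ (or $s_2$) in $\tilde L(A_S)$ can no longer be produced, or that the failed interleaving becomes producible. I will argue the two cases separately.

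For case (1), I would start from the observation that a tuple belongs to $DC3\text{-}V$ only if there exist $E_i,E_j$ with $i\neq j$ such that $p_{E_i\cap E_j}(s_1)$ and $p_{E_i\cap E_j}(s_2)$ both start with the common event $a\in E_i\cap E_j$; this is what places $s_1,s_2$ in $\tilde L(A_S)$ through each other. Once $a$ is excluded from $E_i$ (say), $a\notin E_i\cap E_j$ for this particular pair, so $a$ can no longer be the shared head event that witnesses $s_1,s_2\in\tilde L(A_S)$ via the pair $(E_i,E_j)$. I would then note that passivity together with $EF1$--$EF4$, invoked via Lemma \ref{Decomposability under event failure-Theorem}, guarantees that the deletion of the link (intentional failure of $a$ in $E_i$) does not disturb the already-established $DC1,DC2$ (hence $EF1,EF2$) nor introduce the pathologies ruled out by $EF3,EF4$; in particular no new string appears in the interleaving and no legal string is lost. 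Hence after the modification $(s_1,s_2,a,E_i,E_j)$ is no longer a $DC3$-violating tuple and, crucially, no new violating tuple is created.

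For case (2), the key point is that a $DC3$ violation means $\neg\,\delta\!\left(q_0,\mathop{|}\limits_{i=1}^n p_i(s_i)\right)!$, i.e. the synchronized product of the projections produces an interleaving string not in $L(A_S)$, precisely because some agent cannot tell $s_1$ apart from $s_2$ early enough. I would show that sharing an event $b$ that appears before $a$ in exactly one of $s_1,s_2$ — now $b\in E_i\cap E_j$ — forces $P_i(A_S)$ and $P_j(A_S)$ to disagree on $p_i$ of the two strings at the point of $b$, so the offending cross-interleaving of $s_1$ with $s_2$ fails to survive the synchronization on $b$; consequently the witness $\{s_1,\dots,s_n\}$ no longer satisfies the "$\neg\delta(q_0,\cdot)!$" clause and the tuple leaves $DC3\text{-}V$. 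For the sub-case where no such single $b$ exists, I would check that choosing $e_1\in p_{E_i\cup E_j}(s_1)$ and $e_2\in p_{E_i\cup E_j}(s_2)$ with $e_1\neq e_2$, both appearing before $a$, and placing both into $E_i\cap E_j$, achieves the same separation: after $e_1$ enters the common set, either it already distinguishes the strings (done) or it appears in both, but then $e_2$ (which differs) provides the distinguishing synchronization point before $a$ is reached.

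The main obstacle I anticipate is case (2): I must argue not merely that $s_1$ and $s_2$ are separated, but that the specific failing interleaving string in the definition — which is some shuffle of the $p_k(s_k)$ over all $k$, not just $i$ and $j$ — is genuinely killed by the new synchronization, and that making $b$ (or $e_1,e_2$) common does not accidentally block a legal string of $A_S$ from appearing in the interleaving (a potential fresh $DC4$- or $DC3$-type issue on other strings). Handling this cleanly requires a careful bookkeeping of projections: I would use the facts that $b$ (resp.\ $e_1,e_2$) already lies in $E_i\cup E_j$, so adding it to the intersection only \emph{refines} the synchronization constraint on languages that already contained it, and that $A_S$ itself, being the global automaton, has consistent projections; the delicate step is ensuring the refinement removes the bad shuffle without removing any good one, which I would argue by showing the good interleavings of $A_S$ are exactly the string itself (up to the reorderings already permitted by $DC1,DC2$), all of which respect the position of $b$ relative to $a$. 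This last point is where I expect the bulk of the technical work to sit, and where the hypothesis "$A_S$ satisfies $DC1$ and $DC2$" is essential.
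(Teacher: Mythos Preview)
Your approach is essentially the paper's: for case~(1) you argue, as the paper does, that removing $a$ from $E_i$ (or $E_j$) takes $a$ out of $E_i\cap E_j$ so that $p_{E_i\cap E_j}(s_1)$ and $p_{E_i\cap E_j}(s_2)$ no longer start with $a$; for case~(2) you argue that the newly shared event(s) block the offending cross-interleaving by synchronization, which the paper phrases more concretely as inaccessibility of the mixed product states $([q_1]_i,[q_2]_j)$ and $([q_2]_i,[q_1]_j)$ in $P_i(A_S)\Vert P_j(A_S)$.

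One correction to your plan: the ``main obstacle'' you anticipate---showing that the link addition does not create \emph{new} $DC3$- or $DC4$-violating tuples or block any legal string---is \emph{not} part of this lemma. The statement only claims that the specific tuple $(s_1,s_2,a,E_i,E_j)$ is overcome, and the paper explicitly concedes in the remark immediately following the lemma that the second method ``may lead to new violations of $DC3$ or $DC4$ that have to be avoided or resolved, subsequently.'' Indeed, the paper's own proof of case~(2) notes that when $b$ appears before $a$ in \emph{both} strings, including $b$ removes $(s_1,s_2,a,E_i,E_j)$ but immediately creates the new violating tuple $(s_1,s_2,b,E_i,E_j)$---this is precisely why the two-event alternative is offered. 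So you should drop the last paragraph of your plan; the bookkeeping you are worried about is deferred to the later algorithmic lemmas, and trying to prove it here would in fact be impossible in general.
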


To handle a violation of $DC3$, when, $b\in E_i\backslash E_j$ is to
be included in $E_j$, then $\{b, E_j\}$ is called a $DC3$-enforcing
pair; while, when $\{e_1, e_2\}\subseteq E_i\backslash E_j$ has to
be included in $E_j$, then $\{\{e_1, e_2\}, E_j\}$ is denoted as
$DC3$-enforcing tuple. Finally, when $e_1 \in E_i\backslash E_j$ and
$e_2\in E_j\backslash E_i$ have to be included in $E_j$ and $E_i$,
respectively, then $\{\{e_1, E_j\}, \{e_2, E_i\}\}$ is called a
$DC3$-enforcing tuple.
\begin{proof}
See the proof in the Appendix.
\end{proof}

\begin{remark}\label{Conflicts in Resolution of DC3
violation} Applying the first method in Lemma \ref{Resolution of DC3
violation}, namely, exclusion of $a$ from $E_i$ or $E_j$ in a
$DC3$-violating tuple $(s_1, s_2, a,  E_i, E_j)$, is only allowed if
$a$ is passive in that local event set, and the exclusion does not
violate $EF1$-$EF4$. The reason is that once a shared event $a\in
E_i\cap E_j$ becomes a private one in for example $E_i$, then
decision makings on the order/selection between any $e\in
E_i\backslash a$ and $a$ cannot be accomplished by the $i-th$ agent,
and if there is no other agent to do so, then $A_S$ becomes
undecomposable. Moreover, deletion of a communication link may also
result in generation of new interleavings in the composition of
local automata, that are not legal in $A_S$ (violation of $EF3$). In
addition, deletion of $a$ from $E_i$ may impose a nondeterminism in
bisimulation quotient of $P_i(A_S)$, leading to violation of $EF4$.
On the other hand, the second method, namely, establishing new
communication link by sharing $b$ with $E_i$ or $E_j$ may lead to
new violations of $DC3$ or $DC4$ that have to be avoided or
resolved, subsequently.

Both methods in Lemmas \ref{Resolution of DC3 violation} present
ways to resolve the violation of $DC3$. They differ however in the
number of added communication links, as the first method deletes
links, whereas the second approach adds communication links to
enforce $DC3$. Therefore, in order to have as few number of links as
possible among the agents, one should start with the link deletion
method first, and if it is not successful due to violation of
passivity or any of $EF1$-$EF4$, then link addition is used to
remove $DC3$-violating tuples from $DC3-V$.
\end{remark}

\begin{example}\label{Conflicting DC3 with EF1, EF2 and EF4}
This example shows an undecomposable automaton that suffers from a
conflict on a communication link whose existence violates $DC3$,
whereas its deletion dissatisfies $EF1$, $EF2$ and $EF4$.

Let $snd_e(i)$ and $rcv_e(i)$ respectively denote the set of labels
that $A_i$ sends $e$ to those agents and the set of labels that
$A_i$ receives $e$ from  their agents, defined as $snd_e(i)=\{j\in
\{1,...,n\}|A_i \hbox{ sends }\\ e \hbox{ to } A_j\}$ and
$rcv_e(i)=\{j\in \{1,...,n\}|i\in snd_e(j)\}$. Consider the task
automaton
$A_S$:\\ \xymatrix@R=0.3cm{
 \bullet & \bullet \ar[l]_{b}&  \bullet \ar[l]_{e_1}
 \ar[r]^{e_5} &  \bullet \ar[r]^{e_1}&  \bullet \\
            \bullet & \bullet \ar[l]_{b} && \ar[d] &  \bullet  \ar[dr]^{e_2}    \\
&& \bullet \ar[ul]_{a} \ar[dl]^{b} &  \bullet \ar[l]_{c}\ar[uul]_{d} \ar[ur]^{e_1} \ar[dr]_{e_2} \ar[d]_{a}& &  \bullet  \ar[r]^{a}& \bullet \ar[r]^{e_3}& \bullet \\
  \bullet & \bullet \ar[l]^{e_2} &\bullet & \bullet \ar[l]^{e_2} & \bullet
  \ar[ur]_{e_1} }
with communication pattern\\ $2 \in snd_{a, b, c, d}(1)$, $1 \notin
snd_{a, b, c, d}(1)$ and local event sets $E_1=\{a, b, c, d, e_1,
e_3, e_5\}$, $E_2=\{a, b, c, d, e_2\}$, leading to $P_1(A_S)$:
\xymatrix@R=0.1cm{& \bullet & \bullet \ar[l]_{b}&  \bullet
\ar[l]_{e_1}
 \ar[r]^{e_5} &  \bullet \ar[r]^{e_1}&  \bullet \\
\bullet & \bullet \ar[l]_{b} & \ar[dr]&&  \bullet  \ar[r]^{a}&  \bullet  \ar[r]^{e_3}& \bullet   \\
& \bullet & \bullet\ar[l]^{b}\ar[ul]_{a} &  \bullet
\ar[l]^{c}\ar[uu]^{d}\ar[ur]^{e_1}\ar[r]_{a}& \bullet
                },\\
$P_2(A_S)$: \xymatrix@R=0.1cm{ \bullet & \bullet \ar[l]_{b} &\bullet &\ar[d]&  \bullet  \ar[r]^{a}&  \bullet     \\
\bullet & \bullet \ar[l]^{e_2} & \bullet\ar[l]^{b}\ar[ul]_{a} &
\bullet \ar[l]^{c}\ar[ul]_{d}\ar[ur]^{e_2}\ar[r]_{a}& \bullet
\ar[r]_{e_2}& \bullet
                } and
$P_1(A_S)||P_2(A_S)$: \\\xymatrix@R=0.3cm{ &&&&&&\bullet
\ar[dr]^{e_2}\\
\bullet & \bullet \ar[l]_{b}&  \bullet \ar[l]_{e_1}
 \ar[r]^{e_5} &  \bullet \ar[r]^{e_1}&  \bullet &\bullet \ar[ur]^{e_3}\ar[dr]_{e_2}&&\bullet\\
            \bullet & \bullet \ar[l]_{b} && \ar[d] &  \bullet  \ar[dr]^{e_2}\ar[ur]^{a}&&\bullet\ar[ur]_{e_3}    \\
&& \bullet \ar[ul]_{a} \ar[dl]^{b} &  \bullet \ar[uul]_{d} \ar[l]_{c} \ar[ur]^{e_1} \ar[dr]_{e_2} \ar[d]_{a}& &  \bullet  \ar[r]^{a}& \bullet \ar[r]^{e_3}& \bullet \\
  \bullet & \bullet \ar[l]^{e_2} &\bullet & \bullet \ar[l]^{e_2} & \bullet
  \ar[ur]_{e_1}}
which is not bisimilar to $A_S$. Here, $A_S$ is not decomposable
since two strings $e_1ae_2e_3$ and $e_1ae_3e_2$ are newly generated
from the interleaving of strings in $P_1(A_S)$ and $P_2(A_S)$, while
they do not appear in $A_S$, and hence, $DC3$ is not fulfilled, due
to $DC3$-violating tuples $(e_1e_2ae_3, ae_2, a, E_1, E_2)$ and
$(e_2e_1ae_3, ae_2, a, E_1, E_2)$. Now, as Lemma \ref{Resolution of
DC3 violation}, one way to fix the violation of $DC3$ is by
excluding $a$ from $E_2$. However, although $a$ is passive in $E_2$,
its exclusion from $E_2$ dissatisfies $EF1$( as $\delta(q_0, e_2)!
\wedge \delta(q_0, a)! \wedge \neg [\delta(q_0, e_2a)!\wedge
\delta(q_0, ae_2)!]$) and $EF2$ (since $\delta(q_0, e_1e_2a)! \wedge
\neg \delta(q_0, e_1ae_2)!$). In this case, $DC4$ also will be
violated as $P_2(A_S)$ becomes $P_2(A_S)\cong \xymatrix@R=0.1cm{
\ar[r]&\bullet
\ar[dr]_{e_2}\ar[dl]^{d}\ar[r]^{c}&\bullet \ar[r]^{b}\ar[dr]_{b}&\bullet\\
\bullet &&\bullet &\bullet \ar[r]^{e_2}& \bullet}$ that bisimulates
no deterministic automaton.

Lemma \ref{Resolution of DC3 violation} also suggests another method
to enforce $DC3$, by including either $e_1$ in $E_2$ or $e_2$ in
$E_1$. Inclusion of $e_1$ in $E_2$, however, leads to another
violation of $DC4$, as it produces a nondeterminism after event $d$.
This in turn will need to include $e_5$ in $E_2$ to make $A_S$
decomposable. Alternatively, instead of inclusion of $e_1$ in $E_2$,
one can include $e_2$ in $E_1$, that enforces $DC3$ and makes $A_S$
decomposable. The second method of Lemma \ref{Resolution of DC3
violation} is more elaborated in the next example.
\end{example}

\begin{example}\label{enforcing DC3 using link addition}
This example shows handling of $DC3$-violating tuples using the
second method in Lemma \ref{Resolution of DC3 violation}, i.e., by
event sharing. Later on, this example will be also used to
illustrate the enforcement of $DC4$. Now, consider a task automaton
$A_S$: \xymatrix@C=0.4cm{ \bullet \ar[r]^{e_3} &\bullet\ar[r]^{e_5}
& \bullet \ar[r]^{a}& \bullet \ar[r]^{e_2}& \bullet
 \\
\ar[r]&\bullet \ar[ul]^{e_1}\ar[dl]_{e_5}\ar[r]^{a}&\bullet
\ar[r]^{e_6}&\bullet\\
\bullet \ar[r]_{e_3} &\bullet\ar[r]_{e_1} & \bullet \ar[r]_{a}&
\bullet \ar[r]_{e_4}& \bullet} with local event sets $E_1 = \{a,
e_1, e_3, e_5\}$ and $E_2 = \{a, e_2, e_4, e_6\}$, and let three
branches in $A_S$ from top to bottom to be denoted as $s_1:=
e_1e_3e_5ae_2$, $s_3:= ae_6$ and $s_2:= e_5e_3e_1ae_4$. This
automaton does not satisfy $DC4$ (as $P_2(A_S)$ has no deterministic
bisimilar automaton), as well as $DC3$, as the parallel composition
of $P_1(A_S)$: \xymatrix@C=0.4cm{ \bullet \ar[r]^{e_3}
&\bullet\ar[r]^{e_5} & \bullet \ar[r]^{a}& \bullet
 \\
\ar[r]&\bullet \ar[ul]^{e_1}\ar[dl]_{e_5}\ar[r]^{a}&\bullet
\\
\bullet \ar[r]_{e_3} &\bullet\ar[r]_{e_1} & \bullet \ar[r]_{a}&
\bullet} and $P_2(A_S)$: \xymatrix@C=0.4cm{ \bullet \ar[r]^{e_2}
&\bullet
 \\
\ar[r]&\bullet \ar[ul]^{a}\ar[dl]_{a}\ar[r]^{a}&\bullet
\ar[r]^{e_4}&\bullet\\
\bullet \ar[r]_{e_6} &\bullet} have illegal interleaving strings
$\{e_1e_3e_5ae_6, e_5e_3e_1ae_2\}$, $e_1e_3e_5ae_4$ and
$e_5e_3e_1ae_4$, corresponding to $DC3$-violating tuples $(s_1, s_2,
a, E_1, E_2)$, $(s_1, s_3, a, E_1, E_2)$ and $(s_2, s_3$, $a, E_1,
E_2)$, respectively.

For pairs of strings $\{s_1, s_3\}$ and $\{s_2, s_3\}$, there exits
an event $e_5\in (E_1\cup E_2)\backslash (E_1\cap E_2)$ that appears
before $a$, only in $s_1$ and $s_2$, but not in $s_3$. Therefore,
inclusion of $e_5$ in $E_2$, removes the illegal interleavings
between $s_1$ and $s_2$ with $s_3$, but not across $s_1$ and $s_2$,
as with new $E_2 =\{a, e_2, e_4, e_5, e_6\}$ and $P_2(A_S)$:
\xymatrix@C=0.4cm{ \bullet\ar[r]^{a} & \bullet \ar[r]^{e_2}& \bullet
 \\
\ar[r]&\bullet \ar[ul]^{e_5}\ar[dl]_{e_5}\ar[r]^{a}&\bullet
\ar[r]^{e_6}&\bullet\\
\bullet \ar[r]_{a}& \bullet \ar[r]_{e_4}& \bullet},  $(s_1, s_3, a,
E_1, E_2)$ and $(s_2, s_3, a, E_1, E_2)$ are no longer
$DC3$-violating tuples, while $(s_1, s_2, a, E_1, E_2)$ still
remains a $DC3$-violating one with illegal interleavings
$e_1e_3e_5ae_4$ and $e_5e_3e_1ae_2$. The reason is that $e_5$
appears before $a$ in both $s_1$ and $s_2$, and there is no event
that appear before $a$ only in one of the strings $s_1$ and $s_2$.
For this case, according to Lemma \ref{Resolution of DC3 violation},
two different events that appear before ``a'', one from $p_{E_1\cup
E_2}(s_1) = s_1$ and the other from $p_{E_1\cup E_2}(s_2) = s_2$,
i.e., $e_1$ and $e_5$ have to be attached to $E_2$, resulting in
$E_2 = \{a, e_1, e_2, e_4, e_5, e_6\}$, \xymatrix@C=0.4cm{
\bullet\ar[r]^{e_5} & \bullet \ar[r]^{a}& \bullet \ar[r]^{e_2}&
\bullet
 \\
\ar[r]&\bullet \ar[ul]^{e_1}\ar[dl]_{e_5}\ar[r]^{a}&\bullet
\ar[r]^{e_6}&\bullet\\
\bullet\ar[r]_{e_1} & \bullet \ar[r]_{a}& \bullet \ar[r]_{e_4}&
\bullet} and $P_1(A_S)||P_2(A_S) \cong A_S$.
\end{example}

\subsection{Enforcing $DC4$}
Similar to $DC1$-$DC3$, a violation of $DC4$ can be regarded as a
lack of communication link that causes nondeterminism in a local
task automaton. Such interpretation calls for establishing a new
communication link to prevent the emergence of local nondeterminism.
Moreover, when this local nondeterminism occurs on a shared event,
the corresponding violation of $DC4$ can be overcome by excluding
the shared event from the respective local event set. It should be
noted however that the event exclusion should respect the passivity
and $EF1$-$EF4$ conditions. When $DC4$ is enforced by link
additions, similar to what we discussed for $DC3$, addition of new
communication link may cause new violations of $DC3$ or/and $DC4$.
To enforce $DC4$, firstly a $DC4$-violating tuple is defined as
follows.

\begin{definition}($DC4-violating$ tuple)\label{DC4-violating tuple}
Consider a deterministic automaton $A_S$ with local event sets $E_i
= 1, \ldots, n$, $\forall i\in\{1,...,n\}$, $q, q_1, q_2 \in Q$,
$t_1, t_2 \in (E\backslash E_i)^*$, $e\in E_i$, $\delta(q, t_1e) =
q_1 \neq \delta(q, t_2e) = q_2$, $\exists t \in E^*$, $\delta(q_1,
t)!$, but $\nexists t^{\prime} \in E^*$ such that $\delta(q_2,
t^{\prime})!$, $p_i(t) = p_i (t^{\prime})$. Then, $(q, t_1, t_2, e,
E_i)$ is called a $DC4$-violating tuple.
\end{definition}

This definition suggests the way to overcome the violation of $DC4$,
as stated in the following lemma.

\begin{lemma}\label{Resolution of DC4 violation}
Any $DC4$-violating tuple $(q, t_1, t_2, e, E_i)$ is overcome, when:
\begin{enumerate}
\item $e$ is excluded from $E_i$, (eligible, if it is passive in $E_i$ and its
exclusion respects $EF1-EF4$), or
\item if $\exists e^{\prime} \in (t_1\cup t_2)\backslash (t_1\cap
t_2)$, $e^{\prime}$ is included in $E_i$; otherwise, $e_1\in t_1$
and $e_2\in t_2$, such that $e_1 \neq e_2$, are included in $E_i$.
In these cases, $\{e^{\prime}, E_i\}$ and $\{\{e_1, e_2\}, E_i\}$
are called $DC4$-enforcing tuples.
\end{enumerate}
\end{lemma}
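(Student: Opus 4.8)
The plan is to verify that each of the two remedies in Lemma~\ref{Resolution of DC4 violation} removes the given $DC4$-violating tuple $(q, t_1, t_2, e, E_i)$ by showing that, after the modification of $E_i$, the offending pair of states no longer falsifies the biconditional in $DC4$. Recall that the obstruction is a pair of states $q_1 = \delta(q, t_1 e)$, $q_2 = \delta(q, t_2 e)$ with $t_1, t_2 \in (E\backslash E_i)^*$, which (by the definition of natural projection, since $t_1$ and $t_2$ are $\varepsilon$-equivalent to the empty string under $\sim_{E_i}$ and lead through the same $e\in E_i$) are identified in $P_i(A_S)$ with a single source state $x := [q]_{E_i}$ and a single event $e$ but possibly two distinct targets $x_1 := [q_1]_{E_i}$, $x_2 := [q_2]_{E_i}$; the violation says there is a continuation $t\in E^*$ with $\delta(q_1, t)!$ but no $t'$ with $\delta(q_2, t')!$ and $p_i(t) = p_i(t')$, i.e.\ $\delta_i(x_1, p_i(t))!$ while $\neg\delta_i(x_2, p_i(t))!$.

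\smallskip
\noindent\textbf{Method~1 (exclusion of $e$).} If $e$ is removed from $E_i$, then $e$ becomes an unobservable move for agent $i$, so $q$, $q_1$, and $q_2$ all fall into one $\sim_{E_i'}$-equivalence class (where $E_i' = E_i\backslash\{e\}$); the transition on $e$ disappears from $P_i(A_S)$ and can no longer be a branching transition $\delta_i(x, e) = x_1$, $\delta_i(x, e) = x_2$ witnessing the violation. I would argue that the particular tuple $(q, t_1, t_2, e, E_i)$ is thereby eliminated from the $DC4$-violating set, since the data of that tuple explicitly names $e\in E_i$ and the branching on $e$; with $e\notin E_i'$ the hypothesis of Definition~\ref{DC4-violating tuple} for this tuple is vacuous. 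The eligibility proviso — $e$ passive in $E_i$ and $EF1$--$EF4$ respected — is exactly the hypothesis of Lemma~\ref{Decomposability under event failure-Theorem} transplanted to the ``intentional failure'' of removing the link, and I would invoke that lemma (as already done for $DC3$ in Remark~\ref{Conflicts in Resolution of DC3 violation}) to guarantee no collateral re-violation; I would not re-derive it.

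\smallskip
\noindent\textbf{Method~2 (inclusion of a distinguishing event).} Here the idea is to make agent $i$ observe something that separates the two paths $t_1 e$ and $t_2 e$ before they merge. If there is an event $e^{\prime}\in (t_1\cup t_2)\backslash(t_1\cap t_2)$ — occurring in exactly one of $t_1, t_2$ — then after adding $e^{\prime}$ to $E_i$ the projected words $p_i(t_1 e)$ and $p_i(t_2 e)$ differ (one contains $e^{\prime}$, the other does not), so in the new $P_i(A_S)$ the states reached from $[q]_{E_i}$ along these two words are reached by \emph{distinct} strings; the branching that merged $q_1$ and $q_2$ into two targets of the \emph{same} $(x, e)$ is resolved because the pre-$e$ portions are now observably different, i.e.\ the two runs pass through different states of $P_i(A_S)$ before $e$. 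If no such single-sided event exists, then $t_1$ and $t_2$ have the same event set but (being distinct as the tuple is a genuine violation) must differ in order or multiplicity; picking $e_1\in t_1$, $e_2\in t_2$ with $e_1\neq e_2$ and adding both to $E_i$ forces $p_i(t_1)$ and $p_i(t_2)$ to record these events in the order/multiplicity dictated by $A_S$, which again separates the two runs in $P_i(A_S)$ before $e$. In both sub-cases I would conclude that $\delta_i(x, e)$ is no longer multiply defined from a single state along these runs, so $(q, t_1, t_2, e, E_i)$ ceases to be a $DC4$-violating tuple.

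\smallskip
\noindent\textbf{The main obstacle} is the ``otherwise'' branch of Method~2: when $t_1$ and $t_2$ share their event set, one must be careful that adding $e_1$ and $e_2$ actually yields $p_i(t_1)\neq p_i(t_2)$ — this can fail if $e_1, e_2$ happen to occur with the same relative pattern in both strings, so I expect the correct statement requires choosing $e_1, e_2$ as a \emph{distinguishing} pair (e.g.\ witnessing the first position at which $t_1$ and $t_2$ diverge), and the proof must justify that such a pair exists whenever the tuple is a real violation. A secondary subtlety, to be flagged rather than resolved here, is that inclusion of new events into $E_i$ is only a \emph{local} fix for this tuple and may spawn fresh $DC3$/$DC4$ violations, consistent with Remark~\ref{Conflicts in Resolution of DC3 violation} and the later algorithmic treatment; the lemma as stated claims only that the \emph{given} tuple is overcome, so the proof need not chase those downstream effects.
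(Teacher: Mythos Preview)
Your proposal is correct and follows essentially the same argument as the paper: Method~1 collapses the branching by making $e$ unobservable (so $[q]_i=[\delta(q,t_1e)]_i=[\delta(q,t_2e)]_i$), and Method~2 separates the two runs by forcing $p_i(t_1e)\neq p_i(t_2e)$ before the $e$-transition. The obstacle you flag in the ``otherwise'' branch is apt---the paper's proof simply asserts $p_i(t_1)=e_1$, $p_i(t_2)=e_2$ after inclusion without addressing the case where both $e_1$ and $e_2$ occur in both strings, so the looseness you identify is shared with the original argument rather than a defect unique to your proposal.
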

\begin{proof}
See the proof in the Appendix.
\end{proof}

Following examples illustrate the methods in Lemma \ref{Resolution
of DC4 violation} to enforce $DC4$.

\begin{example}\label{enforcing DC4 by event inclusion and event
exclusion} This example shows an automaton that is undecomposable
due to a violation in $DC4$, while $DC4$ can be enforced using both
methods: event exclusion as well as event inclusion. Consider the
task automaton $A_S$: \xymatrix@R=0.1cm{
              \ar[r]&  \bullet \ar[r]^{e_1} \ar[dr]_{a}   & \bullet  \ar[r]^{a}  & \bullet\ar[r]^{b}& \bullet \ar[r]^{e_2}& \bullet  \\
             &    &\bullet \ar[r]^{e_3}& \bullet
                } with $E_1=\{a, b, e_1, e_3\}$, $E_2=\{a, b,
                e_2\}$, $2\in snd_{a, b}(1)$, $1\notin snd_{a,
                b}(2)$,
               leading to
                $P_1(A_S)$:\\  \xymatrix@R=0.1cm{
          \ar[r]&  \bullet \ar[r]^{e_1} \ar[dr]_{a}  & \bullet  \ar[r]^{a}  & \bullet\ar[r]^{b}& \bullet  \\
             &    &\bullet \ar[r]^{e_3}& \bullet
                },
                 $P_2(A_S)$:\xymatrix@R=0.1cm{
\ar[r]&  \bullet \ar[r]^{a} \ar[dr]_{a}   & \bullet  \ar[r]^{b}  &  \bullet \ar[r]^{e_2}  &  \bullet \\
             &    & \bullet
                }, and\\
$\mathop {||}\limits_{i = 1}^2 P_i (A_S )\cong \xymatrix@R=0.3cm{
\ar[r]&  \bullet \ar[r]^{e_1} \ar[d]_{a}   & \bullet  \ar[r]^{a} \ar[dr]_{a} & \bullet\ar[r]^{b}& \bullet \ar[r]^{e_2}& \bullet \\
 & \bullet \ar[r]^{e_3}& \bullet &\bullet
                }$ which is not bisimilar to $A_S$, due to
                violation of $DC4$ as there does not exist a
                deterministic automaton $P^{\prime}_2(A_S)$ such
                that $P^{\prime}_2(A_S)\cong P_2(A_S)$. Here, $(q_0, t_1 = e_1, t_2 = \varepsilon, a,
                E_2)$ is a $DC4$-violating tuple. Since $a$ is
                passive in $E_2$ and its exclusion from $E_2$ keeps
                $EF1$-$EF4$ valid, according to Lemma \ref{Resolution
of DC4 violation}, one way to enforce $DC4$ is exclusion of $a$ from
$E_2$, resulting in $E_2 = \{b, e_2\}$, $P_2(A_S)$:
\xymatrix@R=0.1cm{ \ar[r]&  \bullet \ar[r]^{b} & \bullet
\ar[r]^{e_2}  &  \bullet } and $P_1(A_S)||P_2(A_S) \cong A_S$.

Another suggestion of Lemma \ref{Resolution of DC4 violation} to
overcome the $DC4$-violating tuple $(q_0, t_1 = e_1, t_2 =
\varepsilon, a, E_2)$ is addition of a communication link to prevent
the nondeterminism in $P_2(A_S)$. Since there exists $e_1$ that
appears before $a$ in $t_1$ only, inclusion of $e_1$ in $E_2$ also
enforces $DC4$ as with new $E_2 = \{a, b, e_1, e_2\}$, $P_2(A_S)$:
\xymatrix@R=0.1cm{
          \ar[r]&  \bullet \ar[r]^{e_1} \ar[dr]_{a}  & \bullet  \ar[r]^{a}  & \bullet\ar[r]^{b}& \bullet  \\
             &    &\bullet
                } and $\mathop {||}\limits_{i = 1}^2 P_i (A_S )\cong
                A_S$.
For the cases that there does not exist an event $b$ that appears
before $a$ in only one of the strings $t_1$ or $t_2$, according to
Lemma \ref{Resolution of DC4 violation}, one needs to attach one
event from each of two strings $t_1$ and $t_2$ in $E_i$. For
instance consider the $DC4$-violating tuple $(t_1 = e_1e_3e_5, t_2 =
e_5e_3e_1, a, E_2)$ in Example \ref{enforcing DC3 using link
addition}, with no event that appears before $a$ in $(t_1\cup
t_2)\backslash (t_1\cap t_2)$. In that case $\{e_1\in t_1, e_5\in
t_2\}$ can be included in $E_2$ to make $A_S$ decomposable, as it
was shown in Example \ref{enforcing DC3 using link addition}.

\end{example}

\begin{example}\label{enforcing DC4 not by event exclusion}

Example \ref{enforcing DC4 by event inclusion and event exclusion}
showed a violation of $DC4$ that could be overcome using both method
in Lemma \ref{Resolution of DC4 violation}, namely, by link deletion
and link addition. In Example \ref{enforcing DC4 by event inclusion
and event exclusion}, event $a$ was a passive shared event whose
exclusion from $E_2$ respected $EF1$-$EF4$, otherwise it was not
allowed to be excluded. If the task automaton was \xymatrix@R=0.1cm{
              \ar[r]&  \bullet \ar[r]^{e_1} \ar[dr]_{a}   & \bullet  \ar[r]^{a}  & \bullet\ar[r]^{e_2}& \bullet \ar[r]^{b}& \bullet  \\
             &    &\bullet \ar[r]^{e_3}& \bullet
                } with $E_1=\{a, b, e_1, e_3\}$, $E_2=\{a, b,
                e_2\}$, then $DC4$ could not be enforced by exclusion
                of $a$ from $E_2$, as $EF2$ was violated since after
                this exclusion, no agent can handle the decision
                making on the order of $a$ and $e_2$.
Another constraint for link deletion is the passivity of the event.
 For example, consider $A_S^{\prime}$:\xymatrix@R=0.1cm{ &&&\bullet\ar[r]^{e_4}&\bullet\\
                \ar[r]&\bullet \ar[r]^{e_1}\ar[dr]_{e_2}&\bullet   \ar[r]_{e_2}\ar[ur]^{a}&\bullet\\
             && \bullet \ar[r]_{e_1}&\bullet} with $E_1 = \{e_1, a\}$, $E_2
             = \{e_2, e_4, a\}$. $A_S^{\prime}$ is not decomposable due
             to violation of $DC4$ in $P_1(A_S)$: \xymatrix@R=0.1cm{
                \ar[r]&\bullet \ar[r]^{e_1}\ar[dr]_{e_1}&\bullet \ar[r]^{a}&\bullet  \\
             && \bullet }. The nondeterminism in $P_1(A_S)$, and accordingly the $DC4$-violating tuple
             $(q_0, \varepsilon, e_2, e_1, E_1)$, cannot be removed by event exclusion since it occurs on $e_1$
             that is not a shared event. To enforce $DC4$ according
             to Lemma \ref{Resolution of DC4 violation}, $e_2$
             is required to be included into $E_1$ that makes
             $A_S^{\prime}$ decomposable.
\end{example}

Another important issue for addition of communication link to
enforce $DC4$ is that establishing new communication link may lead
to new violations of $DC3$ or $DC4$, as it is shown in the following
example.

\begin{example}\label{Enforcing DC4 leading to new violations}
Assume the task automaton in Example \ref{enforcing DC4 by event
inclusion and event exclusion} had a part as shown in he left hand
side of the initial state in $A_S$:\\ \xymatrix@R=0.1cm{ \bullet
&\bullet \ar[l]_{d} &\bullet \ar[l]_{e_1} &\bullet
\ar[l]_{e_5}\ar[dl]^{e_1} &\bullet
\ar[l]_{c}\ar[r]^{e_1}\ar[dr]_{a}& \bullet \ar[r]^{a} & \bullet
\ar[r]^{b} & \bullet
\ar[r]^{e_2} & \bullet\\
&&\bullet &&\ar[u]& \bullet \ar[r]_{e_3}& \bullet} with $E_1 = \{a,
b, c$, $d, e_1, e_3,\\ e_5\}$, $E_2
             = \{a, b, c, d, e_2\}$. Identical to Example \ref{enforcing DC4 by event
inclusion and event exclusion}, $(q_0, t_1 = e_1, t_2 = \varepsilon,
a, E_2)$ is a $DC4$-violating tuple and can be overcome by excluding
$a$ from $E_2$, removing the nondeterminism on $a$ in $P_2(A_S)$.
However, unlike Example \ref{enforcing DC4 by event inclusion and
event exclusion}, including $e_1$ into $E_2$ (i.e., $E_2 = \{a, b,
c, d, e_1, e_2\}$), leads to a new violation of $DC4$ in $P_2(A_S)$:
\xymatrix@R=0.1cm{ \bullet &\bullet \ar[l]_{d} &\bullet \ar[l]_{e_1}
\ar[dl]^{e_1} &\bullet \ar[l]_{c}\ar[r]^{e_1}\ar[dr]_{a}& \bullet
\ar[r]^{a} & \bullet
\ar[r]^{b} & \bullet \ar[r]^{e_2} & \bullet\\
&\bullet &&\ar[u]& \bullet }, with a $DC4$-violating tuple
$(\delta(q_0, c)$, $e_5$, $\varepsilon$, $e_1$, $E_2)$, that in turn
requires attachment of $e_5$ to $E_2$, in order to enforce $DC4$.

If in this example, the order of $e_2$ and $b$ was reverse, i.e.,
the task automaton was $A_S^{\prime}$: \xymatrix@R=0.1cm{ \bullet
&\bullet \ar[l]_{d} &\bullet \ar[l]_{e_1} &\bullet
\ar[l]_{e_5}\ar[dl]^{e_1} &\bullet
\ar[l]_{c}\ar[r]^{e_1}\ar[dr]_{a}& \bullet \ar[r]^{a}& \bullet
\ar[r]^{e_2} & \bullet
\ar[r]^{b} & \bullet\\
&&\bullet &&\ar[u]& \bullet \ar[r]_{e_3}& \bullet} with $E_1$ =
$\{a, b, c, d, e_1, e_3$, $e_5\}$, $E_2$
             = $\{a, b, c, d, e_2\}$.
Then as it was shown in Example \ref{enforcing DC4 not by event
exclusion}, the $DC4$-violating tuple $(q_0, e_1, \varepsilon$, $a,
E_2)$ could not be dealt with exclusion of $a$ from $E_2$, due to
$EF2$, neither by inclusion of $e_1$ into $E_2$ (since as mentioned
above, it generates a new violation of $DC4$ that consequently
requires another inclusion of $e_5$ into $E_2$ to satisfy $DC4$).
\end{example}

\begin{remark}
Both Lemmas \ref{Resolution of DC3 violation} and \ref{Resolution of
DC4 violation} provide sufficient conditions for resolving the
violations of $DC3$ and $DC4$, respectively. They do not however
provide the necessary solutions, neither the optimal solutions, as
illustrated in the following example. We will show that for $DC3$
and $DC4$, in general one requires to search exhaustively to find
the optimal sequence of enforcing tuples, to have minimum number of
link additions. In this sense, instead of exhaustive search for
optimal solution, it is reasonable to introduce sufficient
conditions to provide a trackable procedure for a feasible solution
to make an automaton decomposable.
\end{remark}
\begin{example} Consider a task automaton
$A_S$:\\ \xymatrix@R=0.1cm{ \bullet &\bullet \ar[l]_{e_6} &\bullet
\ar[l]_{b} &\bullet \ar[l]_{e_5} &\bullet
\ar[l]_{e_7}&&\ar[d]&\bullet
 \ar[r]^{e_5}& \bullet \ar[r]^{e_3}&
\bullet \ar[r]^{a}& \bullet \ar[r]^{e_2}& \bullet
\\
&&&&& \bullet \ar[ul]_{e_1} \ar[dl]^{e_7}&\bullet
\ar[l]_{c}\ar[ur]^{e_1} \ar[dr]_{e_3}\\
\bullet & \bullet \ar[l]_{e_8} &\bullet \ar[l]_{b} &\bullet
\ar[l]_{e_1} &\bullet \ar[l]_{e_5}& & &  \bullet \ar[r]^{e_5}&
\bullet \ar[r]^{e_1}& \bullet \ar[r]^{a}& \bullet \ar[r]^{e_4}&
\bullet}\\
 with local event sets $E_1 = \{a, b, c, e_1, e_3, e_5, e_7\}$ and
 $E_2 = \{a, b, c, e_2, e_4, e_6, e_8\}$. $A_S$ is undecomposable
 due to $DC3$-violating tuples $(e_1e_5e_3ae_2, e_3e_5e_1ae_4, a, E_1$,
 $E_2)$ and $(e_1e_7e_5be_6$, $e_7e_5e_1be_8$, $a$, $E_1$, $E_2)$ and
 $DC4$-violating tuples $(q_0$, $e_1e_5e_3$, $e_3e_5e_1$, $a$, $E_2)$ and
 $(\delta(q_0, c), e_1e_7e_5, e_7e_5e_1, b, E_2)$. According to
 Lemmas \ref{Resolution of DC3 violation} and \ref{Resolution of
DC4 violation}, two enforcing tuples $\{\{e_1, e_3\}, E_2\}$ and
$\{\{e_1, e_7\}, E_2\}$ remove all violations of $DC3$ and $DC4$.
However, this solution is not unique, nor optimal, as the enforcing
tuple $\{\{e_1, e_5\}, E_2\}$ enforced $DC3$ and $DC4$ with minimum
number of added communication links.
\end{example}

\subsection{Exhaustive search for optimal decompozabilization}
Another difficulty is that enforcing the decomposability conditions
using link deletion is limited to passivity and $EF1$-$EF4$, and
after deletions of redundant links (that are passive and their
deletion respect $EF1$-$EF4$), the only way to make the automaton
decomposable is to establish new communication links. Addition of
new links, on the other hand, may lead to new violations of $DC3$ or
$DC4$ (as illustrated in Examples \ref{Conflicting DC3 with EF1, EF2
and EF4} and \ref{Enforcing DC4 leading to new violations}), and in
turn may introduce new violations. It means that, in general,
resolution of decomposability conditions can dynamically result in
new violations of decomposability conditions, as it is elaborated in
the following example.

\begin{example}\label{Dynamic violation of DC3 and DC4}
Consider the task automaton $A_S$: \\\xymatrix@R=0.3cm{ &\bullet
&\bullet \ar[l]_{e_{10}} &\bullet \ar[l]_{d} &\bullet \ar[l]_{e_2}
&\bullet \ar[l]_{e_6}
 \ar[r]^{e_2}& \bullet &\bullet
\\
\bullet &\bullet \ar[l]_{e_{12}} &\bullet \ar[l]_{b} &\bullet
\ar[l]_{e_2} &\bullet \ar[l]_{e_4}\ar[dl]^{e_2}&\bullet \ar[l]^{a}
 \ar[r]^{f}\ar[dr]_{e_1}\ar[u]^{c}&\bullet
 \ar[r]_{e_8}\ar[ur]^{e_4}& \bullet \ar[r]_{e_4}&\bullet \ar[r]_{g}
 &\bullet\\
&&&\bullet &&\ar[u] & \bullet  \ar[r]_{e_2} & \bullet \ar[r]_{e_3} &
\bullet \ar[r]_{e_5}&\bullet }
 with local event sets $E_1 = \{a, b, c, d, f, g, e_1, e_3, e_5\}$ and
 $E_2 = \{a, b, c,  d, f, g, e_2, e_4, e_6, e_8, e_{10}, e_{12}\}$.
 This automaton is undecomposable due to $DC2$-violating event pairs
 $\{(e_1, e_2)$, $(e_2, e_3)\}$ with the corresponding enforcing
 tuples $\{e_1, E_2\}$, $\{e_3, E_2\}$ and $\{e_2, E_1\}$ and with
 the following possible sequences:
\begin{enumerate}
 \item $\{e_1, E_2\}$; $\{e_3, E_2\}$:  in this case $A_S$ becomes
decomposable, without emerging new violations of decomposability
conditions;
\item $\{e_1, E_2\}$; $\{e_2, E_1\}$; $\{\{e_4, e_6\}, E_1\}$; $\{e_8, E_1\}$:
if after including $e_1$ in $E_2$, $e_2$ is included in $E_1$, then
two $DC4$-violating tuples $(\delta(q_0, a), \varepsilon, e_4, e_2,
E_1)$ and $(\delta(q_0, c), \varepsilon, e_6, e_2, E_1)$ emerge that
in turn require $\{e_4, e_6\}$ to be attached to $E_1$. Inclusion of
$e_4$ in $E_1$, on the other hand, introduces another
$DC4$-violating tuple $(\delta(q_0, f), \varepsilon, e_8, e_4, E_1)$
that calls for attachment of $e_8$ to $E_1$; similarly
\item $\{e_3, E_2\}$; $\{e_1, E_2\}$;
\item $\{e_3, E_2\}$; $\{e_2, E_1\}$; $\{\{e_4, e_6\}, E_1\}$; $\{e_8,
E_1\}$, and
\item $\{e_2, E_1\}$; $\{\{e_4, e_6\}, E_1\}$; $\{e_8, E_1\}$.
 \end{enumerate}
In this example, the first and the third sequences, i.e., $\{\{e_1,
e_3\}, E_2\}$ gives the optimal choice with minimum number of added
communication links, although initially $\{e_2, E_1\}$ sought to
offer the optimal solution.
\end{example}

Therefore, in general an optimal solution to Problem \ref{Design
Problem} will be obtained through an exhaustive search, using Lemmas
\ref{minimum spanning nodes-n agents-lemma}, \ref{Resolution of DC3
violation} and \ref{Resolution of DC4 violation}, as state in the
following algorithm.

\begin{algorithm}\label{Optimal Decomposabilization Algorithm}
\ \begin{enumerate}
\item For any local event set, exclude any passive event whose exclusion respects $EF1$-$EF4$;
\item identify all $DC1\&2$-violating tuples, $DC3$-violating tuples
and $DC4$-violating tuples and their respective enforcing tuples;
\item among all enforcing tuples, find the one that corresponds to
the most violating tuples;
\item if applying of the enforcing tuples with maximum number of
violating tuples, does not impose new violations of $DC3$ or $DC4$,
then apply it, go to Step $3$ and continue until there is no
violating tuples; otherwise, do the exhaustive search to find the
sequence of link additions with minimum number of added links.
\item end.
\end{enumerate}
\end{algorithm}
\begin{lemma}\label{Optimal Decomposabilization}
Consider a deterministic task automaton $A_S$ with local event sets
$E_i$ such that $E=\overset{n}{\underset{i=1}{\cup}}E_i$. If $A_S$
is not decomposable with respect to parallel composition and natural
projections $P_i$, $i = 1,...,n$, Algorithm \ref{Optimal
Decomposabilization Algorithm} optimally makes  $A_S$ decomposable,
with minimum number of communication links.
\end{lemma}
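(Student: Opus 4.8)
The plan is to verify three things about Algorithm~\ref{Optimal Decomposabilization Algorithm}: that it \emph{terminates}, that on termination $A_S$ with the modified local event sets is \emph{decomposable}, and that the resulting configuration uses the \emph{minimum} possible number of communication links. I identify a communication link with a pair $(e,E_i)$ for which $e\in E_i$ and $|loc(e)|\ge 2$, so that ``adding a link'' means moving a private event into some $E_i$ and ``deleting a link'' the reverse; the argument then follows the control flow of the algorithm step by step, discharging the individual obligations with the earlier lemmas.

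\emph{Termination and correctness.} There are only finitely many potential links $(e,E_i)$, so the fixpoint of passive, $EF1$--$EF4$-respecting deletions in Step~1 halts. For the addition phase the crucial remark is that the fully shared configuration $E_i=E$ for all $i$ makes $P_i(A_S)=A_S$, hence $\mathop{||}\limits_{i=1}^n P_i(A_S)=A_S$, so by Lemma~\ref{Decomposability Corollary for n agents} it is decomposable and by Definitions~\ref{DC1&2-Violating set}, \ref{DC3-violating Events} and \ref{DC4-violating tuple} carries no violating tuple of any kind. Since every iteration of Steps~3--4, and every branch of the exhaustive search, strictly adds at least one link and never deletes one, the process is monotone and dominated by this decomposable ceiling; it therefore reaches a violation-free configuration after at most $|\{(e,E_i):e\notin E_i\}|$ additions, and the Step~4 search tree is finite (bounded depth, finitely many enforcing tuples) with at least one decomposable leaf. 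On termination no violating tuple remains, which by Definitions~\ref{DC1&2-Violating set}, \ref{DC3-violating Events} and \ref{DC4-violating tuple} is precisely the assertion that $DC1$--$DC4$ all hold, whence $A_S\cong\mathop{||}\limits_{i=1}^n P_i(A_S)$ by Lemma~\ref{Decomposability Corollary for n agents}. That each enforcing operation genuinely shrinks a violating set is Lemmas~\ref{Resolution of DC1 and DC2 violation}, \ref{Resolution of DC3 violation} and \ref{Resolution of DC4 violation} term by term; that a Step~1 deletion never creates a violation that must later be undone follows from Lemma~\ref{Decomposability under event failure-Theorem}, reading $EF1$--$EF4$ as $DC1$--$DC4$ over the refined local event sets.

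\emph{Optimality.} I would split this into three pieces. First, deleting redundant links is without loss: given \emph{any} configuration $\mathcal{C}$ for which $A_S$ is decomposable, Lemma~\ref{Decomposability under event failure-Theorem} permits deleting from $\mathcal{C}$ any passive $EF1$--$EF4$-respecting link while preserving decomposability, so some optimal configuration contains no such link — which is the kind of configuration Step~1 delivers — and any genuine violation surviving Step~1 can then only be cured by link addition. Second, in the benign branch of Step~4, where no enforcing step creates a fresh $DC3$ or $DC4$ violation, greedily choosing the enforcing tuple that covers the most outstanding violating tuples is optimal by the edge-deletion/covering argument of Lemma~\ref{minimum spanning nodes-n agents-lemma}, applied uniformly to $DC1\&2$-, $DC3$- and $DC4$-violating tuples: one link addition wipes out all violating tuples incident to it at once, so the greedy cover has minimum cardinality. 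Third, in the complementary branch the exhaustive search returns, by construction, a shortest link-addition sequence making $A_S$ decomposable; composed with the first piece this yields global optimality and answers Problem~\ref{Design Problem}.

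\emph{Main obstacle.} The delicate points that need a real argument rather than a gesture are: (a) that performing \emph{all} deletions before \emph{any} addition costs nothing — one must show no deletion becomes available only after an addition, and that the Step~1 fixpoint is essentially order-independent, both again reducing to the $EF1$--$EF4$ bookkeeping of Lemma~\ref{Decomposability under event failure-Theorem}; (b) that greedy max-coverage inherits optimality from Lemma~\ref{minimum spanning nodes-n agents-lemma} even for $DC3$/$DC4$ enforcing tuples that add two events at once, i.e.\ that the relevant covering structure is still the one analysed there; and (c) that interleaving the resolution of $DC1\&2$ with that of $DC3$ and $DC4$ (Remark~\ref{Resolution of DC1 and DC2 may violate DC3 or DC4}) keeps the loop finite without inflating the cover count. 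The honest resolution of (b)--(c) is exactly the design of Step~4: hand control to the exhaustive search precisely in the situations where greedy optimality is not guaranteed, so that the optimality claim is rescued by an (expensive) search whenever the cheap covering argument fails.
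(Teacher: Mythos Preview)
Your proposal is correct and follows essentially the same approach as the paper: Step~1 strips redundant links via passivity and $EF1$--$EF4$; the benign branch of Step~4 is handled by the greedy/covering argument of Lemma~\ref{minimum spanning nodes-n agents-lemma} extended to all violating tuples; the non-benign branch defers to exhaustive search; and termination follows from finiteness with the fully shared configuration $E_i=E$ as a decomposable upper bound. Your treatment is in fact more thorough than the paper's own proof, which does not explicitly flag the subtleties you list under (a)--(c); the paper simply asserts that the greedy step is optimal ``similar to the proof of Lemma~\ref{minimum spanning nodes-n agents-lemma}'' and that exhaustive search covers the remainder, without separately arguing order-independence of Step~1 or the two-event enforcing tuples.
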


\begin{proof}
See the proof in the Attachment.
\end{proof}

\begin{remark}(Special case: Automata with mutual exclusive branches) \label{Mutual Exclusive Branches}
When branches of $A_S$ share no events (i.e. $\forall q\in Q$, $s,
s^{\prime} \in E^*$,  $\delta(q, s)!$, $\delta(q, s^{\prime})!$,
$s\nless s^{\prime}$, $s^{\prime} \nless s$: $s\cap s^{\prime} =
\emptyset$), due to definition of $DC3$ and $DC4$ in Lemma
\ref{Decomposability Corollary for n agents} $DC3$ and $DC4$ are
trivially satisfied, and moreover, since branches from any state
share no event, then Algorithm \ref{Optimal Decomposabilization
Algorithm} is reduced to Algorithm \ref{minimum spanning nodes-n
agents}.
\end{remark}

\subsection{Feasible solution for task decomposabilization}

As Example \ref{Dynamic violation of DC3 and DC4} showed that, in
general, additions of communication links may successively introduce
new violations of decomposability conditions, for which new links
should be established. Therefore, in general an optimal solution to
Problem \ref{Design Problem} requires an exhaustive search, using
Lemmas \ref{minimum spanning nodes-n agents-lemma}, \ref{Resolution
of DC3 violation} and \ref{Resolution of DC4 violation}. Moreover,
checking of $DC3$ and $DC4$ is a nontrivial task, while it has to be
accomplished initially as well as upon each link addition. It would
be therefore very tractable if we can define a procedure to make
$DC3$ and $DC4$ satisfied, without their examination. Following
result takes an automaton whose $DC1$ and $DC2$ are made satisfied
using Algorithm \ref{minimum spanning nodes-n agents}, and proposes
a sufficient condition to fulfill $DC3$ and $DC4$.

\begin{lemma}\label{Fixing DC3 and DC4}
Consider a deterministic automaton $A_S$, satisfying $DC1$ and
$DC2$. $A_S$ satisfies $DC3$ and $DC4$ if following steps are
accomplished on $A_S$:
\begin{enumerate}
\item $\forall s_1, s_2 \in E^*$, $s_1\nless s_2$, $s_2\nless s_1$, $q, q_1, q_2\in Q$, $\delta(q, s_1) =
q_1 \neq \delta(q, s_2) = q_2$, $[\nexists e_1, e_2 \in E, e_1e_2
\leqslant s_1$, $e_2e_1 \leqslant s_2$, $\forall t \in E^*$,
$\delta(q, e_1e_2t)! \Leftrightarrow \delta(q, e_2e_1t)!]$, $\exists
e\in s_1\cap s_2$,, then $\forall i\in loc(e)$, $\forall e^{\prime}
\in \{e_1\leqslant t_1, e_2\leqslant t_2\}$, $e^{\prime}$ appears
before $e$, include $e^{\prime}$ in $E_i$.
\item go to Step $1$ and continue until
$\forall s_1, s_2 \in E^*$, $s_1\nless s_2$, $s_2\nless s_1$, $q,
q_1, q_2\in Q$, $\delta(q, s_1) = q_1 \neq \delta(q, s_2) = q_2$,
$\exists e\in s_1\cap s_2$, $[\nexists e_1, e_2 \in E, e_1e_2
\leqslant s_1$, $e_2e_1 \leqslant s_2$, $\forall t \in E^*$,
$\delta(q, e_1e_2t)! \Leftrightarrow \delta(q, e_2e_1t)!]$, then
$\forall i\in loc(e)$,  $E_i$ contains the first events of $s_1$ and
$s_2$, that appear before $e$.
\end{enumerate}
\end{lemma}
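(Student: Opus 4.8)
The plan is to verify the two defining conditions, $DC3$ and $DC4$, directly from Definitions~\ref{DC3-violating Events} and~\ref{DC4-violating tuple}, using the fixed point reached by the procedure. First I would record two preliminaries. (i) The procedure terminates: although $s_1,s_2$ range over $E^{*}$, finiteness of $A_S$ means only finitely many pairs of diverging strings are relevant (it suffices to truncate each string at the first occurrence of the shared event $e$ under consideration and to avoid state repetitions), and every execution of Step~1 that does not already meet the Step~2 fixed-point condition strictly enlarges some $E_i\subseteq E$; since $\sum_i|E_i|\le n|E|$ is bounded, only finitely many enlargements are possible. (ii) Enlarging local sets preserves $DC1$ and $DC2$: the disjunct ``$\exists E_i,\ \{e_1,e_2\}\subseteq E_i$'' is monotone in the $E_i$, while the disjunct ``$\delta(q,e_1e_2)!\wedge\delta(q,e_2e_1)!$'' (and its $DC2$ analogue) is a property of $A_S$ alone. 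Hence $DC1$ and $DC2$ may be assumed throughout for the modified local event sets.

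Second, I would establish $DC3$ by contradiction. Suppose the terminal automaton still admits a $DC3$-violating tuple $(s_1,s_2,a,E_i,E_j)$, so $p_{E_i\cap E_j}(s_1)$ and $p_{E_i\cap E_j}(s_2)$ both start with $a$ and some $w\in\mathop{|}\limits_{k}p_k(s_k)$ has $\neg\delta(q_0,w)!$. After factoring out the longest common prefix (the case where one string is a prefix of the other being immediate, since then the projections are nested), $s_1$ and $s_2$ diverge at a common state $q$, write $s_1=u_1 a v_1$, $s_2=u_2 a v_2$ with $u_1,u_2$ free of events of $E_i\cap E_j$; the illegal $w$ must arise by gluing a prefix of $s_1$ to a prefix of $s_2$ along the synchronization on $a$. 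There is no order-decision pair $e_1,e_2$ for $(s_1,s_2)$ at $q$, for otherwise $DC1$/$DC2$ (already satisfied) would make that order immaterial and this particular interleaving would be legal. Then the Step~2 terminal condition applies with $e:=a$: for every $k\in loc(a)$, in particular $k=i$ and $k=j$, the set $E_k$ contains the first events of $s_1$ and of $s_2$ preceding $a$. Consequently $P_i(A_S)$ and $P_j(A_S)$ already separate the prefix of $s_1$ up to $a$ from the prefix of $s_2$ up to $a$, so no prefix of $p_k(s_1)$ can be reconciled with a prefix of $p_k(s_2)$ before $a$ occurs; this rules out the gluing that produced $w$, a contradiction.

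Third, I would treat $DC4$ analogously. Assume a $DC4$-violating tuple $(q,t_1,t_2,e,E_i)$ survives, with $t_1,t_2\in(E\backslash E_i)^{*}$, $\delta(q,t_1e)=q_1\neq\delta(q,t_2e)=q_2$, and $q_1,q_2$ not $p_i$-language-equivalent. Here $e\in E_i$, and the strings $s_1:=t_1e\,t$ and $s_2:=t_2e\cdots$ realizing $\delta(q_1,t)!$ versus the missing equivalent continuation from $q_2$ form a diverging pair from $q$ with $e\in s_1\cap s_2$; again no order-decision pair exists for them at $q$, since $DC1$/$DC2$ would otherwise determinize the relevant branch. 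The Step~2 terminal condition then forces $E_i$ (as $i\in loc(e)$) to contain the first events of $t_1$ and $t_2$ -- but those events lie in $E\backslash E_i$ by hypothesis, a contradiction. Hence no such tuple exists and $DC4$ holds.

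The main obstacle I anticipate is the bookkeeping that connects the purely combinatorial phrase ``events appearing before $e$ in $s_1,s_2$'' to the actual behaviour of the natural projections and their synchronous product: one must make rigorous, by an induction on the length of the illegal witness $w$ (respectively on the depth at which $q_1,q_2$ first fail to be $p_i$-equivalent), that sharing the first diverging events removes \emph{all} residual illegal interleavings and \emph{all} irreparable local nondeterminism rather than merely pushing them further along the strings, and that this interacts correctly with the order-decision-pair exception at every branching point. One must also check that Step~1, by sometimes sharing several events at once, does not create a fresh pair of diverging strings for which the Step~2 condition has not yet been imposed -- which is exactly why the outer ``go to Step~1'' loop, together with the termination argument of preliminary~(i), is needed to close the proof.
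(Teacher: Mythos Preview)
Your proposal is correct and follows essentially the same line as the paper's proof: both arguments hinge on the Step~2 fixed point forcing the first events of any two diverging strings $s_1,s_2$ sharing an event $e$ to lie in every $E_i$ with $i\in loc(e)$, and then conclude that no $DC3$- or $DC4$-violating tuple can survive. The paper phrases this more tersely as ``the antecedent of $DC3$ (respectively $DC4$) becomes false'' --- i.e., $p_{E_i\cap E_j}(t_1)=p_{E_i\cap E_j}(t_2)=\varepsilon$ fails for $DC3$, and $p_i(t_1)=p_i(t_2)=\varepsilon$ fails for $DC4$ --- whereas you run each case by contradiction; these are logically the same step. Your preliminaries on termination and on monotone preservation of $DC1$/$DC2$ under enlargement of the $E_i$ are useful additions that the paper leaves implicit.
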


\begin{proof}
See the proof in the Attachment.
\end{proof}

\begin{remark}
The condition in Lemma \ref{Fixing DC3 and DC4} intuitively means
that for any two strings $s_1$, $s_2$ from any state $q$, sharing an
event $e$, all agents who know this event $e$ will be able to
distinguish two strings, if they know the first event of each
string. The ability of those agents that know this event $e$ to
distinguish strings $s_1$ and $s_2$, prevents illegal interleavings
(to enforce $DC3$) and local nondeterminism (to satisfy $DC4$). The
significance of this condition is that it does not require to check
$DC3$ and $DC4$, instead provides a tractable (but more
conservative) procedure to enforce $DC3$ and $DC4$. The expression
$s_1\nless s_2$, $s_2 \nless s_1$ in the lemma, is to exclude the
pairs of strings that one of them is a substring of the other, as
their language product does not exceed from the strings of $A_S$,
provided $DC1$ and $DC2$. Moreover, the expression $[\nexists e_1,
e_2 \in E, e_1e_2 \leqslant s_1$, $e_2e_1 \leqslant s_2$, $\forall t
\in E^*$, $\delta(q, e_1e_2t)! \Leftrightarrow \delta(q, e_2e_1t)!]$
in this lemma excludes the pairs of strings $e_1e_2t$ and $e_2e_1t$
from any $q\in Q$ that have been already checked using $DC1$ and
$DC2$ and do not form illegal interleaving strings, and hance, do
not need to include $e_1$ in the local event sets of $e_2$ and vice
versa (see Example \ref{Design Example}).
\end{remark}

Combination of Lemmas \ref{minimum spanning nodes-n agents-lemma}
and \ref{Fixing DC3 and DC4} leads to the following algorithm as a
sufficient condition to make a deterministic task automaton
decomposable. Following algorithm uses Lemma \ref{minimum spanning
nodes-n  agents-lemma} to enforce $DC1$ and $DC2$ followed by Lemma
\ref{Fixing DC3 and DC4} to overcome the violations of $DC3$ and
$DC4$.

\begin{algorithm}\label{Feasible Decomposabilization Algorithm}
\ \begin{enumerate}
\item For a deterministic automaton $A_S$, with local event sets
$E_i$, $i = 1, \ldots, n$, $\forall E_i \in \{E_1, \ldots, E_n\}$,
$E_i^0 = E_i\backslash \{e\in E_i|e$ is passive in $E_i$ and
exclusion of $e$ from $E_i$ does not violate $EF1$-$EF4\}$;
\item  form the $DC1\&2$-Violating graph ; set $V^0(A_S) = V(A_S)$; $W^0(A_S) =
W(A_S)$; $G^0(A_S)=(W(A_S), \Sigma)$; k=1;
\item Among all events in the nodes in $W^{k-1}(A_S)$, find $e$ with
the maximum $|W_e^{k-1}(A_S, E_i^{k-1})|$, for all $E_i^{k-1}\in
\{E_1^{k-1}, \ldots, E_n^{k-1}\}$;
\item $E_i^k = E_i^{k-1} \cup \{e\}$;  and delete all edges from $e$ to $E_i^k$;
\item update $W_e^k(A_S, E_i)$ for all nodes of $G(A_S)$;
\item set $k=k+1$ and go to step $(3)$;
\item continue, until there exist no edges.
\item $\forall s_1, s_2 \in E^*$, $s_1\nless s_2$, $s_2\nless s_1$, $q, q_1, q_2\in Q$, $\delta(q, s_1) =
q_1 \neq \delta(q, s_2) = q_2$, $[\nexists e_1, e_2 \in E, e_1e_2
\leqslant s_1$, $e_2e_1 \leqslant s_2$, $\forall t \in E^*$,
$\delta(q, e_1e_2t)! \Leftrightarrow \delta(q, e_2e_1t)!]$, $\exists
e\in s_1\cap s_2$, then $\forall i\in loc(e)$, $\forall e^{\prime}
\in \{e_1\leqslant t_1, e_2\leqslant t_2\}$, $e^{\prime}$ appears
before $e$, include $e^{\prime}$ in $E_i$.
\item go to Step $1$ and continue until
$\forall s_1, s_2 \in E^*$, $s_1\nless s_2$, $s_2\nless s_1$, $q,
q_1, q_2\in Q$, $\delta(q, s_1) = q_1 \neq \delta(q, s_2) = q_2$,
$\exists e\in s_1\cap s_2$, $[\nexists e_1, e_2 \in E, e_1e_2
\leqslant s_1$, $e_2e_1 \leqslant s_2$, $\forall t \in E^*$,
$\delta(q, e_1e_2t)! \Leftrightarrow \delta(q, e_2e_1t)!]$, then
$\forall i\in loc(e)$, $E_i$ contains the first events of $s_1$ and
$s_2$, that appear before $e$.
\end{enumerate}
\end{algorithm}

Based on this formulation, a solution to Problem \ref{Design
Problem} is given as the following theorem.
\begin{theorem}\label{Feasible decomposition}
Consider a deterministic task automaton $A_S$ with local event sets
$E_i$ such that $E=\overset{n}{\underset{i=1}{\cup}}E_i$. If $A_S$
is not decomposable with respect to parallel composition and natural
projections $P_i$, $i = 1,...,n$, Algorithm \ref{Feasible
Decomposabilization Algorithm} makes $A_S$ decomposable. Moreover,
if after Step $7$, $DC3$ and $DC4$ are satisfied, then the algorithm
makes $A_S$ decomposable, with minimum number of communication
links.
\end{theorem}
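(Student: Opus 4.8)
The plan is to read Algorithm~\ref{Feasible Decomposabilization Algorithm} as the concatenation of three blocks and to discharge each block with the lemma it was built to invoke, so that at the end all four conditions $DC1$--$DC4$ of Lemma~\ref{Decomposability Corollary for n agents} hold and decomposability follows. Concretely, Step~1 is pure \emph{link deletion}; Steps~2--7 coincide verbatim with Algorithm~\ref{minimum spanning nodes-n agents}; and Steps~8--9 coincide verbatim with the two steps of Lemma~\ref{Fixing DC3 and DC4}. Thus the bulk of the argument is an orchestration: show that the preconditions of each lemma are in force when its block begins, then patch the blocks together, then argue termination.

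First I would justify Step~1. Deleting an event $e$ from a local set $E_i$ is, by definition, an intentional passive failure, and the step removes only those $e$ that are passive and whose removal preserves $EF1$--$EF4$; hence by Lemma~\ref{Decomposability under event failure-Theorem} no such deletion can destroy a decomposability property that held beforehand. Its purpose is only to eliminate the ``over‑communication'' source of undecomposability once and for all, so that after Step~1 every surviving violation is genuinely a lack of communication that must be repaired by additions. Next I would observe that Steps~2--7 are literally Algorithm~\ref{minimum spanning nodes-n agents} applied to the event sets $E_i^0$ produced by Step~1; by Lemma~\ref{minimum spanning nodes-n agents-lemma} this block terminates and leaves $A_S$ satisfying $DC1$ and $DC2$. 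A small but necessary observation is monotonicity: enlarging any $E_i$ can only make the disjunct $[\exists E_i\in\{E_1,\dots,E_n\},\ \{e_1,e_2\}\subseteq E_i]$ in $DC1$ and $DC2$ easier to satisfy, so once $DC1$ and $DC2$ hold they stay valid under every later inclusion in Steps~8--9. Finally, Steps~8--9 are exactly the procedure of Lemma~\ref{Fixing DC3 and DC4}, whose hypothesis -- that $A_S$ already satisfies $DC1$ and $DC2$ and is deterministic -- is met at that point; hence after Step~9 the automaton also satisfies $DC3$ and $DC4$. With all of $DC1$--$DC4$ in force, Lemma~\ref{Decomposability Corollary for n agents} yields $A_S\cong \mathop{||}\limits_{i=1}^n P_i(A_S)$, i.e. $A_S$ is decomposable.

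For termination of the algorithm as a whole I would note that $Q$ and $E$ are finite, every loop iteration in Steps~3--7 and in Steps~8--9 strictly enlarges some $E_i\subseteq E$, and there are only $n$ such sets, so at most $n|E|$ inclusions can ever occur; together with the finite edge set of the $DC1\&2$-violating graph this bounds the number of iterations. For the minimality clause I would argue as follows. Under the stated hypothesis, $DC3$ and $DC4$ are already true after Step~7; then the guards of Steps~8 and~9 are vacuous, so those steps add no link and the \emph{only} inclusions performed are those of the Algorithm~\ref{minimum spanning nodes-n agents} block. On the other hand, any modification of the local event sets that makes $A_S$ decomposable must in particular make $DC1$ and $DC2$ hold, hence must add at least as many communication links as the minimum required to satisfy $DC1$ and $DC2$ -- and that minimum is precisely what Lemma~\ref{minimum spanning nodes-n agents-lemma} guarantees Steps~2--7 achieve. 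Since Step~1 only \emph{removes} redundant passive links, the total number of links at the end is minimal.

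The step I expect to be the main obstacle is the interface between Steps~2--7 and Steps~8--9: $DC3$ and $DC4$ are \emph{not} monotone under adding shared events -- Examples~\ref{Example-Resolution of DC1 and DC2 may violate DC3 or DC4} and~\ref{Enforcing DC4 leading to new violations} exhibit additions that create fresh $DC3$/$DC4$ violations -- so one must be certain that Lemma~\ref{Fixing DC3 and DC4} really closes the loop, i.e. that its own inclusions cannot perpetually spawn new $DC3$/$DC4$ violations. That termination-and-correctness statement is exactly the content of Lemma~\ref{Fixing DC3 and DC4}, whose conservative rule ``share the first events of $s_1$ and $s_2$ with every agent that knows their shared event'' does the work; the care in the present proof is simply to verify that Algorithm~\ref{Feasible Decomposabilization Algorithm} invokes that lemma with its hypotheses intact ($A_S$ deterministic, $DC1$ and $DC2$ satisfied), both of which the earlier blocks preserve.
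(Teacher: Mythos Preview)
Your proof takes the same route as the paper's: read Algorithm~\ref{Feasible Decomposabilization Algorithm} as Step~1 (link deletion), Steps~2--7 (Algorithm~\ref{minimum spanning nodes-n agents} via Lemma~\ref{minimum spanning nodes-n agents-lemma}), and Steps~8--9 (Lemma~\ref{Fixing DC3 and DC4}), then conclude via Lemma~\ref{Decomposability Corollary for n agents}. Your added observations on monotonicity of $DC1$/$DC2$ and on termination are welcome elaborations absent from the paper's terse proof.

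One point needs care. In the minimality clause you assert that ``the guards of Steps~8 and~9 are vacuous'' whenever $DC3$ and $DC4$ hold after Step~7. That is not what Steps~8--9 test: they implement the \emph{sufficient} condition of Lemma~\ref{Fixing DC3 and DC4}, which is strictly stronger than $DC3\wedge DC4$. Concretely, Step~8 fires on \emph{any} pair $s_1,s_2$ sharing an event $e$ whose first events are not already in every $E_i$ with $i\in loc(e)$, regardless of whether $DC3$/$DC4$ already hold; so the block can add links even when they are unnecessary. The minimality claim therefore must be read (as the paper's subsequent Remark makes explicit) as: if $DC3$ and $DC4$ happen to hold after Step~7, then the configuration \emph{at that point} is already decomposable with the minimum number of added links, and Steps~8--9 need not be executed. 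Your lower-bound argument -- any decomposabilization must satisfy $DC1$/$DC2$, hence must add at least as many links as Lemma~\ref{minimum spanning nodes-n agents-lemma}'s minimum -- is the right one; just rephrase the conclusion so it refers to the state after Step~7 rather than to vacuity of Steps~8--9.
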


\begin{proof}
After excluding the redundant shared events in the first step, the
algorithm enforces $DC1$ and $DC2$ in Steps $2$ to $7$, according to
Lemma \ref{minimum spanning nodes-n agents-lemma} and deals with
$DC3$ and $DC4$ in Steps $8$ and $9$, based on Lemma \ref{Fixing DC3
and DC4}.
\end{proof}
\begin{remark}
If after Step $7$, no violation of $DC3$ or $DC4$ is reported in the
automaton, then $A_S$ is made decomposable with minimum number of
added communication links; otherwise, the optimal solution can be
obtained through exhaustive search by examining the number of added
links for any possible sequence of enforcing tuples, using Lemmas
\ref{Resolution of DC3 violation} and \ref{Resolution of DC4
violation}, as it was presented in Lemma \ref{Optimal
Decomposabilization}. To avoid the exhaustive search the algorithm
provides a sufficient condition to enforce $DC3$ and $DC4$ in Steps
$8$ and $9$, according to Lemma \ref{Fixing DC3 and DC4}. The
algorithm terminates, due to finite number of states and events, and
the fact that at the worst case, when all events are shared among
all agents, the task automaton is trivially decomposable.
\end{remark}

\begin{example}\label{Design Example}
Consider a task automaton

$A_S$: \xymatrix@C=0.6cm{
 &\bullet &\bullet \ar[l]_{e_6}&&&&\bullet \ar[r]^{d} \ar[dr]^{e_{10}} &\bullet \ar[r]^{e_{10}} &\bullet\\
 & \bullet &\bullet \ar[l]_{e_7} &\bullet
 \ar[l]_{e_6}\ar[ul]_{e_7}&& & \bullet
 \ar[u]^{a}&\bullet \ar[r]^{d}&\bullet\\
 \bullet & \bullet\ar[l]_{e_9} & \bullet\ar[l]_{c} & \bullet \ar[l]_{e_2}\ar[dl]^{e_8}&
 \bullet \ar[l]_{b}
 \ar[ul]_{a}\ar[r]^{f}&\ar[ur]^{e_{11}}\ar[rr]^{e_2}\ar[dr]_{e_3}&&\bullet
 \ar[r]^{e_1} \ar[dr]_{e_5}&\bullet\\
\bullet & \bullet\ar[l]_{e_{12}} & \bullet\ar[l]_{e_2}& \ar[ur]&&&
\bullet \ar[ur]^{e_4}&&\bullet}\\
            with local event sets $E_1 = \{a, b, c, d, f, e_1, e_3,
e_5, e_7, e_9, e_{11}\}$ and $E_2 = \{a, b, c, d, f, e_2, e_4, e_6,
e_8,\\ e_{10}, e_{12}\}$, with the communication pattern $2 \in
snd_{a, b, c, d}(1)$ and no more communication links. This task
automaton is not decomposable, due to the set of $DC1\&2$-violating
tuples $\{e_1, e_2\}$, $\{e_1, e_4\}$, $\{e_2, e_3\}$, $\{e_2,
e_5\}$, $\{e_3, e_4\}$, $\{e_4, e_5\}$, $DC3$-violating tuples
$(e_{11}ade_{10}, ae_7e_6, a, E_1, E_2)$, $(e_{11}ade_{10}$,
$ae_6e_7, a, E_1, E_2)$, $(e_{11}ae_{10}d, ae_7e_6, a, E_1, E_2)$,
$(e_{11}ae_{10}d, ae_6e_7, a, E_1, E_2)$ and $DC4$-violating tuple
$(q_0, e_{11}, \varepsilon, a, E_2)$. There is also one event $d$
that is redundantly shared with $E_2$ as $d$ is passive in $E_2$ and
its exclusion respects $EF1$-$EF4$. Therefore, at the first step,
the algorithm excludes $d$ from $E_2$.

Next step is to construct the $DC1\&2$-Violating graph and remove
its edges by sharing one node from each edge. The set of
$DC1\&2$-Violating event pair is obtained as $V^0(A_S) = \{\{e_1,
e_2\}, \{e_1, e_4\},\\ \{e_2, e_3\}, \{e_2, e_5\}, \{e_3, e_4\},
\{e_4, e_5\}\}$ with $W^0(A_S) = \{e_1, e_2, e_3, e_4, e_5\}$.  It
can be seen that the private events $d, e_6, e_7, e_8, e_9, e_{10},
e_{11}, e_{12}$, and shared events $a, b, c, f$ are not included in
$W^0(A_S)$ as they have no contribution in violation of $DC1$ and
$DC2$. The $DC1\&2$-Violating graph is shown in Figure
\ref{GraphAlgorithmFig1}(a).
\begin{figure}[ihtp]
      \begin{center}
     \includegraphics[width=0.6\textwidth]{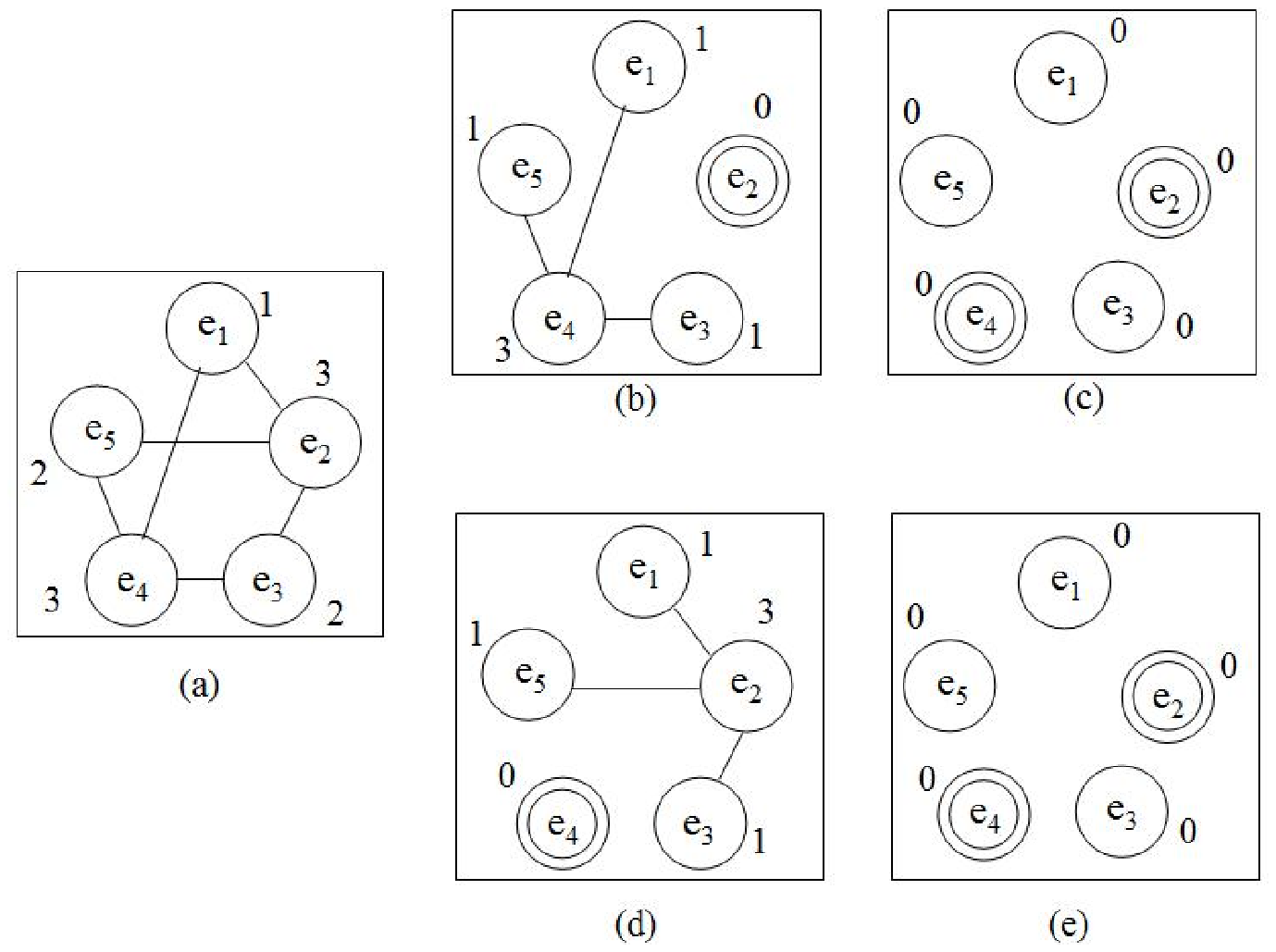}
        \caption{Illustration of enforcing $DC1$ and $DC2$ in Example \ref{Design
        Example},
        using Algorithm \ref{Feasible Decomposabilization Algorithm}.}
         \label{GraphAlgorithmFig1}
        \end{center}
      \end{figure}

The maximum $|W_e^{k-1}(A_S, E_i^{k-1})|$ is formed by $\{e_2,
e_4\}$ with respect to $E_1$ (here, since the system has only two
local event sets $|W_e^{k-1}(A_S, E_i^{k-1})|$ coincides to the
valency of $e$ in the graph). Marking $e_2$, including it to $E_1$
($E_1^1 = \{a, b, c, d, e_1, e_3, e_5, e_7, e_9, e_{11}, e_2\}$) and
removing its incident edges to $E_1$ and updating the $|W_e^k(A_S,
E_i^k)|$ (valencies) are shown in Figure
\ref{GraphAlgorithmFig1}(b). The next step will include $e_4$ in
$E_1$ ($E_1^2 = \{a, b, c, d, e_1, e_3, e_5, e_7, e_9, e_{11}, e_2,
e_4\}$) with the highest $|W_e^k(A_S, E_i^k)|$ and removing its
incident edges to $E_1$ and updating the $|W_e^k(A_S, E_i^k)|$ will
accomplish enforcing of $DC1$ and $DC2$ upon Step $7$, as it is
illustrated in Figure \ref{GraphAlgorithmFig1} (c). If from the
first stage $e_4$ was chosen instead of $e_2$, the procedure was
similarly performed as depicted in Figures \ref{GraphAlgorithmFig1}
(d) and (e), resulting the same set of private events $\{e_2, e_4\}$
to be shared with $E_1$. Inclusion of $e_2$ in $E_1$, however,
introduces a new $DC4$-violating tuple $(\delta(q_0, b),
\varepsilon, e_8, e_2, E_1)$ that will be automatically overcome in
Step $8$ by sharing $e_8\in s_1 = e_8e_2e_{12}$ (as $s_1=
e_8e_2e_{12}$ together with $s_2 = e_2ce_9$ evolve from $\delta(q_0,
b)$, sharing $e_2 \in s_1\cap s_2$) in all local event sets of
$e_2$, i.e., by including $e_8$ into $E_1$. Similarly, inclusion of
$e_{11}$ in $E_2$ overcomes $DC4$-violating tuple $(q_0, e_{11},
\varepsilon, a, E_2)$. It is worth noting that the expression
``$\nexists e_1, e_2 \in E, e_1e_2 \leqslant s_1$, $e_2e_1 \leqslant
s_2$, $\forall t\in E^*$, $\delta(q, e_1e_2t)!\Leftrightarrow
\delta(q, e_2e_1t)!$'' in Step $8$ prevents unnecessary inclusion of
$e_{10}$ in $E_1$ as well as $e_7$ in $E_2$ and $e_6$ in $E_1$
($e_6$ and $e_7$ satisfy $DC1$-$DC2$ and $e_{10}$ and $d$ satisfy
$EF1$-$EF2$). The algorithm terminates in this stages, leading to
decomposability of $A_S$, with $E_1^3 = \{a, b, c, d, e_1, e_3, e_5,
e_7, e_9, e_{11}, e_2, e_4, e_8\}$, $E_2^3 = E_2$, $E_2^3 = \{a, b,
c, e_2, e_4, e_6, e_8, e_{10}, e_{11}, e_{12}\}$.

\end{example}

\section{CONCLUSIONS}\label{CONCLUSION}
The paper proposed a method for task automaton decomposabilization,
applicable in top-down cooperative control of distributed discrete
event systems. This result is a continuation of our previous works
on task automaton decomposition
\cite{Automatica2010-2-agents-decomposability,
TAC2011-n-agents-decomposability}, and fault-tolerant cooperative
tasking \cite{Automatica2011-Fault-tolerant}, and investigates the
follow-up question to understand that how an originally
undecomposable task automaton can be made decomposable, by modifying
the event distribution among the agents.

First, using the decomposability conditions the sources of
undecomposability are identified and then a procedure was proposed
to establish new communication links in order to enforce the
decomposability conditions. 
To avoid the exhaustive search and the difficulty of checking of
decomposability conditions in each step, a feasible solution was
proposed as a sufficient condition that can make any deterministic
task automaton decomposable.
%

\section{APPENDIX}
\subsection{Proof of Lemma $\ref{minimum spanning nodes-n agents-lemma}$}
Following lemma will be used during the proof.
\begin{lemma}\label{Comparison of decreasing chains}
Consider two non-increasing chains $a_i$, $b_i$, $i=1,...,N$, such
that $a_1\geq a_2\geq ... \geq a_N>0$, $b_1\geq b_2\geq ... \geq
b_N>0$. Then $\overset{N}{\underset{i=1}{\Sigma}}a_i<
\overset{N}{\underset{i=1}{\Sigma}}b_i$ implies that $\exists k\in
\{ 1,..., N\}$ such that $a_k<b_k$.
\end{lemma}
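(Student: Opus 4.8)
The plan is to prove the contrapositive, which is immediate. Suppose that $a_k \ge b_k$ for every index $k \in \{1,\ldots,N\}$. Summing these $N$ inequalities gives $\sum_{i=1}^{N} a_i \ge \sum_{i=1}^{N} b_i$, contradicting the hypothesis $\sum_{i=1}^{N} a_i < \sum_{i=1}^{N} b_i$. Hence at least one index $k$ must satisfy $a_k < b_k$, which is exactly the claim. Equivalently, in the direct (non-contrapositive) phrasing: since $\sum_{i=1}^{N}(b_i - a_i) > 0$ is a finite sum, at least one summand $b_k - a_k$ is strictly positive.

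It is worth observing that this argument uses neither the monotonicity of the two chains nor the strict positivity $a_N, b_N > 0$; the stated implication holds verbatim for arbitrary finite real sequences $a_i$, $b_i$. Those hypotheses are carried along only because of the context in which the lemma is invoked in the proof of Lemma~\ref{minimum spanning nodes-n agents-lemma}, where $a_i$ and $b_i$ are the successive valencies (numbers of incident edges removed per iteration) along two competing runs of Algorithm~\ref{minimum spanning nodes-n agents} on the $DC1\&2$-Violating graph; there the chains are genuinely non-increasing and the indexing is chosen to run over a common finite length, but these structural facts are not needed for the present implication.

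Accordingly, I do not anticipate any real obstacle in this lemma. The only mild temptation to avoid is trying to exploit the non-increasing structure term by term (for instance, a ``first crossing index'' argument on the partial sums): that is unnecessary, since the pigeonhole/contrapositive step already exhibits an index $k$ with $a_k < b_k$ without any appeal to the ordering of the terms.
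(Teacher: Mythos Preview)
Your argument is correct and coincides with the paper's own proof: both establish the contrapositive by assuming $a_k\ge b_k$ for all $k$ and summing to obtain $\sum a_i\ge\sum b_i$, contradicting the hypothesis. Your additional remark that neither the monotonicity nor the strict positivity is actually used is accurate; the paper invokes ``$a_k,b_k>0$'' at this step, but it is indeed superfluous.
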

\begin{proof}
Suppose by contradiction that
$\overset{N}{\underset{i=1}{\Sigma}}a_i<
\overset{N}{\underset{i=1}{\Sigma}}b_i$, but, $\nexists k\in \{
1,..., N\}$ such that $a_k < b_k$. Then, $\forall k \in \{1,...,N\}:
a_k\geq b_k$. Therefore, since $a_k, b_k >0, \forall k \in
\{1,...,N\}$, it results in $\overset{N}{\underset{i=1}{\Sigma}}a_i
\geq \overset{N}{\underset{i=1}{\Sigma}}b_i$ which contradicts to
the hypothesis, and the proof is followed.
\end{proof}
Now, we prove Lemma \ref{minimum spanning nodes-n agents-lemma} as
follows. In each iteration $k$ for the event $e$ and local event set
$E_i$ with maximum $|W_e^{k-1}(A_S, E_i^{k-1})|$, all edges from $e$
to $E_i$ are deleted. Denoting the set of deleted edges in $k-th$
iterations by $\Delta \Sigma^k$, in each iteration $k$, some
elements of $\Sigma^{k-1}$ are moved into $\Delta \Sigma^k$ until
after $K$ iterations, there is no more elements in $\Sigma^K$ to be
moved into a new set. This iterative procedure leads to a
partitioning of $\Sigma$ by $\{\Delta \Sigma ^k\}_{k = 1}^K$, as
$\{\Delta \Sigma^k\}\cap \{\Delta \Sigma^l\}=\emptyset$, $\forall
k,l=\{1,...,K\}, k\neq l$ and $\overset{K}{\underset{k=1}{\cup }}\
\Delta \Sigma ^k = \Sigma$. The latter equality leads to
\begin{equation}\label{partitioning-Size-K}
\overset{K}{\underset{k=1}{\Sigma }}|\Delta \Sigma ^k|=|\Sigma|
\end{equation}
Now, we want to prove that
\begin{equation}\label{partitioning-Max-Size}
|\Delta \Sigma ^k|=|\Delta \Sigma^k|_{max}, \forall k\in\{1,...,K\}
\Rightarrow K=K_{min}
\end{equation}

Here, $K$ is the total number of iterations that is also equal to
the number of added communication links to remove violations of
$DC1$ and $DC2$. In this sense, $K$ is desired to be minimized.

The proof of (\ref{partitioning-Max-Size}) is by contradiction as
follows. Suppose that $|\Delta \Sigma ^k|=|\Delta \Sigma ^k|_{max}$,
$\forall k\in\{1,...,K\}$, but, $K\neq K_{min}$, i.e., there exists
another partitioning $\{\Delta^{\prime} \Sigma ^k\}_{k =
1}^{K^{\prime}}$, with $K^{\prime} < K$ partitions, leading to
\begin{equation}\label{partitioning-Size-K-prime}
\overset{K^{\prime}}{\underset{k=1}{\Sigma }}|\Delta ^{\prime}
\Sigma ^k|=|\Sigma|
\end{equation}
In this case, from (\ref{partitioning-Size-K}) and
(\ref{partitioning-Size-K-prime}), we have
\begin{equation}\label{expand the equality of partitions}
\overset{K}{\underset{k=1}{\Sigma }}|\Delta \Sigma^k| =
\overset{K^{\prime}}{\underset{k=1}{\Sigma }}|\Delta \Sigma ^k| +
\overset{K}{\underset{k = K^{\prime}+1}{\Sigma }}|\Delta \Sigma ^k|=
\overset{K^{\prime}}{\underset{k = 1}{\Sigma }}|\Delta ^{\prime}
\Sigma ^k|.
\end{equation}
Since $|\Delta \Sigma ^k|>0$, $\forall k\in\{1,...,K\}$, then
$\overset{K}{\underset{k = K^{\prime}+1}{\Sigma }}|\Delta \Sigma ^k|
> 0$, then, (\ref{expand the equality of partitions}) results in
\begin{equation}\label{comparison of partitions}
\overset{K^{\prime}}{\underset{k = 1}{\Sigma }}|\Delta \Sigma ^k| <
\overset{K^{\prime}}{\underset{k = 1}{\Sigma }}|\Delta ^{\prime}
\Sigma ^k|.
\end{equation}
Moreover, since $|\Delta \Sigma ^k|>0$, $|\Delta^{\prime} \Sigma
^k|>0$, $\forall k\in\{1,...,K\}$, then (\ref{comparison of
partitions}) together with Lemma \ref{Comparison of decreasing
chains} imply that $\exists k\in \{1,...,K^{\prime}\}\subseteq
\{1,...,K\}$, i.e., $|\Delta \Sigma ^k| < |\Delta ^{\prime} \Sigma
^k|$, i.e., $\exists k\in \{1,...,K\}$ such that $|\Delta \Sigma
^k|\neq |\Delta \Sigma
 ^k|_{max}$,
which contradicts to the hypothesis, and hence,
(\ref{partitioning-Max-Size}) is proven. Moreover, if automaton
$A_S$ has no violations of $DC3$ and $DC4$ before and during the
iterations, then the algorithm make it decomposable with the minimum
number of added communication links, since the problem of making
decomposable is reduced to optimal enforcing of $DC1$ and $DC2$.

\subsection{Proof for Lemma \ref{Resolution of DC3 violation}}
For any $DC3$-violating tuple $(s_1, s_2, a,  E_i, E_j)$, exclusion
of $a$ from $E_i$ or $E_j$, excludes $a$ from $E_i\cap E_j$, leading
to $p_{E_i\cap E_j}(s_1)$ and $p_{E_i\cap E_j}(s_1)$ do not start
with $a$, and hence $(s_1, s_2, a,  E_i, E_j)$ will no longer act as
a $DC3$-violating tuple.

For the second method in this lemma, firstly $\forall q\in Q$, $s_1,
s_2 \in E^*$, $\delta(q, s_1)!$, $\delta(q, s_2)!$, $p_{E_i\cap
E_j}(s_1)$ and $p_{E_i\cap E_j}(s_2)$ start with $a$, such that
$(s_1, s_2, a, E_i, E_j)$ is a $DC3$-violating tuple, $\exists b\in
(E_i\cup E_j)\backslash (E_i\cap E_j)$ such that $b$ appears before
$a$ in $s_1$ or $s_2$ (since $A_S$ is deterministic and $p_{E_i\cap
E_j}(s_1)$ and $p_{E_i\cap E_j}(s_2)$ start with $a$).

 Two cases are possible, here: $b$
appears in only one of the strings $s_1$ or $s_2$; or $b$ appears in
both strings. If $b$ appears before $a$ in only of the strings, then
without loss of generality, assume that $b$ belongs to only $s_1$,
and hence, $\exists q, q_1, q_2, q^{\prime}_1, q^{\prime\prime}_1\in
Q_i\times Q_j$, $\omega_1, \omega_2 \in [(E_i\cup E_j)\backslash
(E_i\cap E_j)]^*$, $\omega ^{\prime}_1 \in (E_i\cup E_j)^*$, $a \in
E_i\cap E_j$ such that $\delta_{i,j}(q, \omega_1) = q^{\prime}_1$,
$\delta_{i,j}(q^{\prime}_1, b) = q^{\prime\prime}_1$,
$\delta_{i,j}(q^{\prime\prime}_1, \omega^{\prime}_1) = q_1$,
$\delta_{i,j}(q_1, a)!$, $\delta_{i,j}(q, \omega_2) = q_2$,
$\delta_{i,j}(q_2, a)!$, where, $\delta_{i,j}$ is the transition
relation in $P_i(A_S)||P_j(A_S)$. Now, due to synchronization
constraint in parallel composition, inclusion of $b$ in $E_i\cap
E_j$ means that $([q^{\prime\prime}_1], y)$ and $(x,
[q^{\prime\prime}_1]_j)$ are accessible in $P_i(A_S)||P_j(A_S)$ only
if $y = [q^{\prime\prime}_1]_j$ and $x = [q^{\prime\prime}_1]_i$,
respectively. This means that $([q_1]_i, [q_2]_j)$ and $([q_2]_i,
[q_1]_j)$ are not accessible in $P_i(A_S)||P_j(A_S)$, and hence,
$p_i(s_1)|p_j(s_2)$ and $p_i(s_2)|p_j(s_1)$ cannot evolve after $a$,
and therefore, do not generate illegal strings out of the original
strings, implying that $(s_1, s_2, a,  E_i, E_j)$ will no longer
remain a $DC3$-violating tuple.

On the other hand, if $b$ appears before $a$, in both strings $s_1$
and $s_2$, then $\exists q, q_1, q_2, q^{\prime}_1,
q^{\prime\prime}_1, q^{\prime}_2, q^{\prime\prime}_2\in Q_i\times
Q_j$, $\omega_1, \omega_2 \in [(E_i\cup E_j)\backslash (E_i\cap
E_j)]^*$, $\omega ^{\prime}_1, \omega ^{\prime}_2 \in (E_i\cup
E_j)^*$, $a \in E_i\cap E_j$ such that $\delta_{i,j}(q, \omega_1) =
q^{\prime}_1$, $\delta_{i,j}(q^{\prime}_1, b) = q^{\prime\prime}_1$,
$\delta_{i,j}(q^{\prime\prime}_1, \omega^{\prime}_1) = q_1$,
$\delta_{i,j}(q_1, a)!$, $\delta_{i,j}(q, \omega_2) = q^{\prime}_2$,
$\delta_{i,j}(q^{\prime}_2, b) = q^{\prime\prime}_2$,
$\delta_{i,j}(q^{\prime\prime}_2, \omega^{\prime}_2) = q_2$,
$\delta_{i,j}(q_2, a)!$, that leads to accessibility of
$([q_1^{\prime}]_i, [q_2^{\prime}]_j)$ and $([q_2^{\prime}]_i,
[q_1^{\prime}]_j)$ as well as $([q_1]_i, [q_2]_j)$ and $([q_2]_i,
[q_1]_j)$ in $P_i(A_S)||P_j(A_S)$, that means that although $(s_1,
s_2, a,  E_i, E_j)$ is no longer a $DC3$-violating tuple, $(s_1,
s_2, b,  E_i, E_j)$ emerges as a new $DC3$-violating tuple.

In this case (when $\nexists b\in (E_i\cup E_j)\backslash (E_i\cap
E_j)$ that appears before $a$ in only one of the strings $s_1$ or
$s_2$), instead of inclusion of $b$ in $E_i\cap E_j$, if two
different events that appear before $a$ in strings $p_{E_i\cup
E_j}(s_1)$ and $p_{E_i\cup E_j}(s_1)$ are attached to $E_i\cap E_j$,
it leads to $\exists q, q_1, q_2, q_3, q_4\in Q_i\times Q_j$,
$\omega_1, \omega_2, \omega^{\prime}_1, \omega^{\prime}_2 \in
[(E_i\cup E_j)\backslash (E_i\cap E_j)]^*$, $e_1, e_2, a \in E_i\cap
E_j$ such that $\delta_{i,j}(q, \omega_1e_1) = q_1$,
$\delta_{i,j}(q_1, \omega^{\prime}_1) = q_3$, $\delta_{i,j}(q_3,
a)!$, $\delta_{i,j}(q, \omega_2e_2) = q_2$, $\delta_{i,j}(q_2,
\omega^{\prime}_2) = q_4$, $\delta_{i,j}(q_4, a)!$. Consequently,
due to synchronization constraint in parallel composition,
$([q_1]_i, [q]_j)$, $([q]_i, [q_1]_j)$, $([q_2]_i, [q]_j)$ and
$([q]_i, [q_2]_j)$, and hence, $([q_3]_i, [q_4]_j)$ and $([q_4]_i,
[q_3]_j)$ are not accessible in $P_i(A_S)||P_j(A_S)$, i.e., no more
$DC3$-violating tuples form on strings $s_1$ and $s_2$.

\subsection{Proof for Lemma \ref{Resolution of DC4 violation}}
For any $DC4$-violating tuple $(q, t_1, t_2, e, E_i)$, with $q, q_1,
q_2 \in Q$, $t_1, t_2 \in (E\backslash E_i)^*$, $e\in E_i$,
$\delta(q, t_1) = q_1 \neq \delta(q, t_2) = q_2$, exclusion of $e$
from $E_i$ leads to $p_i(e) = \varepsilon$, and $p_i(t_1e) = p_i
(t_2e) = \varepsilon$, $[q]_i = [\delta(q_1, e)]_i = [\delta(q_2,
e)]_i$, and hence, $(q, t_1, t_2, e, E_i)$ will no longer behave as
a $DC4$-violating tuple. However, it should be noted that it may
cause another nondeterminism on an event after $e$, and this event
exclusion is allowed only if $e$ is passive in $E_i$ and the
exclusion does not violate $EF1-EF4$.

For the second method, i.e., event inclusion, if $\exists e^{\prime}
\in (t_1\cup t_2)\backslash (t_1\cap t_2)$, then without loss of
generality, assume that $e^{\prime} \in t_1\backslash t_2$ such that
$\exists q, q_1, q_2, q^{\prime}_1, q^{\prime\prime}_1\in Q$, $t_1,
t_2 \in (E\backslash E_i)^*$, $e\in E_i$, $\delta(q, t_1) = q_1 \neq
\delta(q, t_2) = q_2$, $\delta(q^{\prime}_1, e^{\prime}) =
q^{\prime\prime}_1$. In this case, inclusion of $e^{\prime}$ in
$E_i$ leads to $p_i(t_1e) = e^{\prime}e$, while $p_i(t_2e) = e$, and
therefore, $[q_1]_i = [q^{\prime\prime}_1]_i \neq [q_2]_i$, i.e., in
$P_i(A_S)$, $t_1$ and $t_2$ will no longer cause a nondeterminism on
$e$ from $q$, and accordingly, $(q, t_1, t_2, e, E_i)$ will not
remain a $DC4$-violating tuple.

 If however $\nexists e^{\prime} \in
(t_1\cup t_2)\backslash (t_1\cap t_2)$, i.e., $\forall e^{\prime}
\in (t_1\cup t_2)$, $e^{\prime}\in(t_1\cap t_2)$, then inclusion of
any such $e^{\prime}$ generates a $DC4$-violating tuple $(q, t_1,
t_2, e^{\prime}, E_i)$. In this case, Lemma \ref{Resolution of DC4
violation} suggests to take two different events that appear before
$e$, one from $t_1$ and the other from $t_2$, and include them into
$E_i$ such that $\exists q, q_1, q_2, q^{\prime}_1, q^{\prime}_2 ,
q^{\prime\prime}_1, q^{\prime\prime}_2\in Q$, $e_1\in t_1, e_2\in
t_2$, $e_1 \neq e_2$, $\delta(q, t_1) = q_1\neq \delta(q, t_2) =
q_2$, $\delta(q^{\prime}_1, e_1) = q^{\prime\prime}_1$,
$\delta(q^{\prime}_2, e_2) = q^{\prime\prime}_2$. Thus, including
$e_1$ and $e_2$ in $E_i$ results in $p_i(t_1) = e_1$, $p_i(t_2) =
e_2$, $\delta_i ([q]_i, t_1)! = [q_1]_i \neq \delta_i([q]_i, t_2)=
[q_2]_i$, meaning that $(q, t_1, t_2, e, E_i)$ is not a
$DC4$-violating tuple anymore.

\subsection{Proof for Lemma \ref{Optimal Decomposabilization}}
 The algorithm starts with excluding events from local event
sets in which the events are passive and their exclusion do not
violate $EF1$-$EF4$. From this stage onwards the decomposability
conditions are no longer allowed to be enforced by link deletion,
whereas the algorithm removes the violations of decomposability
conditions by establishing new communication links. Next, the
algorithm applies violating tuples in the order of corresponding
number of violating tuples. If no new violations of decomposability
conditions emerge during conducting of enforcing tuples, then the
algorithm decomposes the task automaton with minimum number of
 communication links, similar to the proof of Lemma $\ref{minimum spanning nodes-n
 agents-lemma}$, since iterations partition the set of violating
 tuples, and applying of enforcing tuples (based on Lemmas \ref{minimum spanning nodes-n
 agents-lemma}, \ref{Resolution of DC3 violation} and \ref{Resolution of DC4
violation}) with maximum number of violating tuples in each
iteration gives maximum number of resolutions per link addition that
leads to the minimum number of added communication links. The
algorithm will terminate due to finite number of states and events
and at the worst case all events are shared among all agents to make
the automaton decomposable.

\subsection{Proof for Lemma \ref{Fixing DC3 and DC4}}
Denoting the expression , ``$\forall E_i, E_j \in \{E_1, \ldots,
E_n\}$, $i \neq j$, $a \in E_i \cap E_j$,  $s = t_1at_1^{\prime},
s^{\prime} = t_2at_2^{\prime}$, $p_{E_i\cap E_j}(t_1) = p_{E_i\cap
E_j}(t_2) = \varepsilon$'' as $A$, and the expression ``$\delta(q_0,
\mathop |\limits_{i = 1}^n p_i \left( {s_i } \right))!$ for any
$\forall \{s_1, \cdots, s_n\}\subseteq \tilde{L}(A_S)$, $s,
s^{\prime} \in \{s_1, \cdots, s_n\}$'' as $B$, the condition $DC3$
can be written as $A\Rightarrow B$. Now, if $\forall s_1, s_2 \in
E^*$, $s_1\nless s_2$, $s_2\nless s_1$, $q, q_1, q_2\in Q$,
$\delta(q, s_1) = q_1 \neq \delta(q, s_2) = q_2$, $[\nexists e_1,
e_2 \in E, e_1e_2 \leqslant s_1$, $e_2e_1 \leqslant s_2$, $\forall t
\in E^*$, $\delta(q, e_1e_2t)! \Leftrightarrow \delta(q,
e_2e_1t)!]$, $\exists e\in s_1\cap s_2$, any $e^{\prime} \in
\{e_1\leqslant t_1, e_2\leqslant t_2\}$, such that $e^{\prime}$
appears before $e$, is included in $E_i$, $\forall i\in loc(e)$, it
follows that $\forall E_i, E_j \in \{E_1, \ldots, E_n\}$, $i \neq
j$, $a \in E_i \cap E_j$,  $s = t_1at_1^{\prime}, s^{\prime} =
t_2at_2^{\prime}$, $\delta(q_0, s)! \neq \delta(q_0, s^{\prime})!$,
$a \in s\cap s^{\prime}$, then the first event of $t_1$ and $t_2$
belong to $E_i\cap E_j$, i.e., $A$ (the antecedent of $DC3$) becomes
false, and hence, $A\Rightarrow B$ ( $DC3$ ) holds true. Therefore,
the procedure in Lemma \ref{Fixing DC3 and DC4} gives a sufficient
conditions to make $DC3$ always true.

It is similarly a sufficient condition for $DC4$ as follows. Let the
expressions ``$\forall i\in\{1, \ldots, n\}$, $x, x_1, x_2 \in Q_i$,
$e\in E_i$, $t\in E_i^*$, $\delta(x, e) = x_1 \neq \delta(x, e) =
x_2$'' and ``$\forall t\in E_i^*: \delta(x_1, t)!\Leftrightarrow
\delta(x_2, t)!$'' to be denoted as $C$ and $D$, respectively. In
this case, $DC4$ can be expressed as $C\Rightarrow D$. Then, for a
deterministic automaton $A_S$, if $\forall s_1, s_2 \in E^*$,
$s_1\nless s_2$, $s_2\nless s_1$, $q, q_1, q_2\in Q$, $\delta(q,
s_1) = q_1 \neq \delta(q, s_2) = q_2$, $[\nexists e_1, e_2 \in E,
e_1e_2 \leqslant s_1$, $e_2e_1 \leqslant s_2$, $\forall t \in E^*$,
$\delta(q, e_1e_2t)! \Leftrightarrow \delta(q, e_2e_1t)!]$, $\exists
e\in s_1\cap s_2$, the first event of $s_1$ and $s_2$ are included
in all local event sets that contain $e$, it results in $\neg
C$(i.e., the antecedent of $DC4$ becomes false, and consequently,
$DC4$ becomes always true), since in such case $\forall E_i \in
\{E_1, \ldots, E_n\}$, $t_1, t_2 \in E^*$, $q, q_1, q_2 \in Q$,
$e\in E_i$, $\delta(q, t_1e) = q_1 \neq \delta(q, t_2e) = q_2$, then
$\neg[p_i(t_1) = p_i(t_2) = \varepsilon]$.

Expression ``$[\nexists e_1, e_2 \in E$, $e_1e_2 \leqslant s_1$,
$e_2e_1 \leqslant s_2$, $\forall t \in E^*$, $\delta(q, e_1e_2t)!
\Leftrightarrow \delta(q, e_2e_1t)!]$'', in Lemma \ref{Fixing DC3
and DC4} is to exclude those pairs of strings $s_1$ and $s_2$ that
start with $e_1e_2$ and $e_2e_1$, respectively, as they have been
already checked with $DC1$ and $DC2$ and their interleaving does not
impose illegal strings.
\bibliographystyle{IEEEtran}
\bibliography{Ref_13_arXiv}

\begin{thebibliography}{10}
\providecommand{\url}[1]{#1}
\csname url@rmstyle\endcsname
\providecommand{\newblock}{\relax}
\providecommand{\bibinfo}[2]{#2}
\providecommand\BIBentrySTDinterwordspacing{\spaceskip=0pt\relax}
\providecommand\BIBentryALTinterwordstretchfactor{4}
\providecommand\BIBentryALTinterwordspacing{\spaceskip=\fontdimen2\font plus
\BIBentryALTinterwordstretchfactor\fontdimen3\font minus
  \fontdimen4\font\relax}
\providecommand\BIBforeignlanguage[2]{{%
\expandafter\ifx\csname l@#1\endcsname\relax
\typeout{** WARNING: IEEEtran.bst: No hyphenation pattern has been}%
\typeout{** loaded for the language `#1'. Using the pattern for}%
\typeout{** the default language instead.}%
\else
\language=\csname l@#1\endcsname
\fi
#2}}

\bibitem{Lesser1999}
V.~R. Lesser, ``Cooperative multiagent systems: A personal view of the state of
  the art,'' \emph{IEEE Transactions on Knowledge and Data Engineering},
  vol.~11, pp. 133--142, 1999.

\bibitem{Lima2005}
P.~U. Lima and L.~M. Custódio, \emph{Multi-Robot Systems, Book Series Studies
  in Computational Intelligence, Book Innovations in Robot, Mobility and
  Control}.\hskip 1em plus 0.5em minus 0.4em\relax Berlin: Springer Berlin /
  Heidelberg, 2005, vol.~8.

\bibitem{Choi2009}
J.~Choi, S.~Oh, and R.~Horowitz, ``Distributed learning and cooperative control
  for multi-agent systems,'' \emph{Automatica}, vol.~45, no.~12, pp.
  2802--2814, 2009.

\bibitem{Kazeroon2009}
E.~Semsar-Kazerooni and K.~Khorasani, ``Multi-agent team cooperation: A game
  theory approach,'' \emph{Automatica}, vol.~45, no.~10, pp. 2205--2213, 2009.

\bibitem{Georgios2009}
G.~E. Fainekos, A.~Girard, H.~Kress-Gazit, and G.~J. Pappas, ``Temporal logic
  motion planning for dynamic robots,'' \emph{Automatica}, vol.~45, no.~2, pp.
  343--352, 2009.

\bibitem{Ji2009}
Z.~Ji, Z.~Wang, H.~Lin, and Z.~Wang, ``Brief paper: Interconnection topologies
  for multi-agent coordination under leader-follower framework,''
  \emph{Automatica}, vol.~45, no.~12, pp. 2857--2863, 2009.

\bibitem{Tabuada2006}
P.~Tabuada and G.~Pappas, ``Linear time logic control of discrete-time linear
  systems,'' \emph{Automatic Control, IEEE Transactions on}, vol.~51, no.~12,
  pp. 1862--1877, Dec. 2006.

\bibitem{Belta2007}
C.~Belta, A.~Bicchi, M.~Egerstedt, E.~Frazzoli, E.~Klavins, and G.~Pappas,
  ``Symbolic planning and control of robot motion [grand challenges of
  robotics],'' \emph{Robotics and Automation Magazine, IEEE}, vol.~14, no.~1,
  pp. 61--70, March 2007.

\bibitem{Kloetzer2010}
M.~Kloetzer and C.~Belta, ``Automatic deployment of distributed teams of robots
  from temporal logic motion specifications,'' \emph{Robotics, IEEE
  Transactions on}, vol.~26, no.~1, pp. 48 --61, feb. 2010.

\bibitem{Hinchey2005}
M.~G. Hinchey, J.~L. Rash, W.~Truszkowski, C.~Rouff, and R.~Sterritt,
  ``Autonomous and autonomic swarms,'' in \emph{Software Engineering Research
  and Practice}, 2005, pp. 36--44.

\bibitem{Truszkowski2006}
W.~Truszkowski, M.~Hinchey, J.~Rash, and C.~Rouff, ``Autonomous and autonomic
  systems: a paradigm for future space exploration missions,'' \emph{Systems,
  Man, and Cybernetics, Part C: Applications and Reviews, IEEE Transactions
  on}, vol.~36, no.~3, pp. 279--291, May 2006.

\bibitem{Kloetzer2007}
M.~Kloetzer and C.~Belta, ``Temporal logic planning and control of robotic
  swarms by hierarchical abstractions,'' \emph{Robotics, IEEE Transactions on},
  vol.~23, no.~2, pp. 320--330, April 2007.

\bibitem{Crespi2008}
V.~Crespi, A.~Galstyan, and K.~Lerman, ``Top-down vs bottom-up methodologies in
  multi-agent system design,'' \emph{Auton. Robots}, vol.~24, no.~3, pp.
  303--313, 2008.

\bibitem{Willner1991}
Y.~Willner and M.~Heymann, ``Supervisory control of concurrent discrete-event
  systems,'' \emph{International Journal of Control}, vol.~54, pp. 1143--1169,
  1991.

\bibitem{cai2010supervisor}
K.~Cai and W.~Wonham, ``Supervisor localization: a top-down approach to
  distributed control of discrete-event systems,'' \emph{Automatic Control,
  IEEE Transactions on}, vol.~55, no.~3, pp. 605--618, 2010.

\bibitem{Koutsoukos2000}
X.~Koutsoukos, P.~Antsaklis, J.~Stiver, and M.~Lemmon, ``Supervisory control of
  hybrid systems,'' \emph{Proceedings of the IEEE}, vol.~88, no.~7, pp.
  1026--1049, Jul 2000.

\bibitem{Cassandras2008}
C.~G. Cassandras and S.~Lafortune, \emph{Introduction to discrete event
  systems}.\hskip 1em plus 0.5em minus 0.4em\relax USA: Springer, 2008.

\bibitem{Kumar1999}
R.~Kumar and V.~K. Garg, \emph{Modeling and Control of Logical Discrete Event
  Systems}.\hskip 1em plus 0.5em minus 0.4em\relax Norwell, MA, USA: Kluwer
  Academic Publishers, 1999.

\bibitem{Automatica2010-2-agents-decomposability}
M.~Karimadini and H.~Lin, ``Guaranteed global performance through local
  coordinations,'' \emph{Automatica}, vol.~47, no.~5, pp. 890 -- 898, 2011.

\bibitem{TAC2011-n-agents-decomposability}
------, ``Necessary and sufficient conditions for task automaton
  decomposition,'' \emph{submitted for publication, online available at:
  http://arxiv.org/abs/1101.2002}, 2011.

\bibitem{Automatica2011-Fault-tolerant}
------, ``Fault-tolerant cooperative tasking for multi-agent systems,''
  \emph{submitted for publication, online available at:
  http://arxiv.org/abs/1101.2003v2}, 2011.

\bibitem{Moore1990}
A.~P. Moore, ``The specification and verified decomposition of system
  requirements using csp,'' \emph{IEEE Transactions on Software Engineering},
  vol.~16, pp. 932--948, 1990.

\bibitem{Go1999}
K.~Go and N.~Shiratori, ``A decomposition of a formal specification: An
  improved constraint-oriented method,'' \emph{IEEE Trans. Softw. Eng.},
  vol.~25, no.~2, pp. 258--273, 1999.

\bibitem{Arora1998}
A.~Arora and S.~Kulkarni, ``Component based design of multitolerant systems,''
  \emph{Software Engineering, IEEE Transactions on}, vol.~24, no.~1, pp. 63
  --78, Jan. 1998.

\bibitem{Hultstrom1994}
M.~Hultstr{\"o}m, ``Structural decomposition,'' in \emph{PSTV}, 1994, pp.
  201--216.

\bibitem{Cao1994}
T.~Cao and A.~Sanderson, ``Task decomposition and analysis of robotic assembly
  task plans using petri nets,'' \emph{Industrial Electronics, IEEE
  Transactions on}, vol.~41, no.~6, pp. 620 --630, dec 1994.

\bibitem{Zaitsev2004}
D.~A. Zaitsev, ``Decomposition of petri nets,'' \emph{Cybernetics and Sys.
  Anal.}, vol.~40, no.~5, pp. 739--746, 2004.

\bibitem{Morin1998}
R.~Morin, ``Decompositions of asynchronous systems,'' in \emph{CONCUR '98:
  Proceedings of the 9th International Conference on Concurrency Theory}.\hskip
  1em plus 0.5em minus 0.4em\relax London, UK: Springer-Verlag, 1998, pp.
  549--564.

\bibitem{Mukund2002}
M.~Mukund, \emph{From global specifications to distributed implementations, in
  B. Caillaud, P. Darondeau, L. Lavagno (Eds.), \emph{Synthesis and Control of
  Discrete Event Systems}, Kluwer}.\hskip 1em plus 0.5em minus 0.4em\relax
  Berlin: Springer Berlin / Heidelberg, 2002.

\bibitem{Kiyamura2003}
S.~Kiyamura, Y.~Takata, and H.~Seki, ``Process decomposition via
  synchronization events and its application to counter-process
  decomposition,'' in \emph{PPAM}, 2003, pp. 298--305.

\bibitem{Zhou2006}
C.~Zhou, R.~Kumar, and S.~Jiang, ``Control of nondeterministic discrete-event
  systems for bisimulation equivalence,'' \emph{Automatic Control, IEEE
  Transactions on}, vol.~51, no.~5, pp. 754 -- 765, may 2006.

\bibitem{Alur2000}
R.~Alur, T.~Henzinger, G.~Lafferriere, and G.~Pappas, ``Discrete abstractions
  of hybrid systems,'' \emph{Proceedings of the IEEE}, vol.~88, no.~7, pp. 971
  --984, jul 2000.

\bibitem{Godsil2001}
C.~Godsil and G.~Royle, \emph{Algebraic Graph Theory}.\hskip 1em plus 0.5em
  minus 0.4em\relax New York: Springer, 2001.

\end{thebibliography}
\end{document}